\newtheorem{theorem}{Theorem}
\newtheorem{lemma}{Lemma}
\begin{document}

\title{
 Joint Information Freshness and Completion Time Optimization % and Computation Optimization  
%  Minimizing Age of Information 
  for Vehicular Networks%Computation of
%    Cloud Applications 
%	P$^{2}$S: 
%	Optimizing Age of Information  and Task Computation Time in Wireless Networks
%	
%	Joint  for AoI and  Scheduling minimization in 
%	
%	Prioritized Probabilistic Scheduling for Joint  and  Optimization in Information-update Systems
}

%\title{ Joint Age-of-Information and Job Scheduling Optimization in Autonomous Cars Systems}
%
\author{Abubakr Alabbasi and Vaneet Aggarwal\\
Purdue University, West Lafayette, IN 47907, USA\\
 Email:\{aalabbas,vaneet\}@purdue.edu \\
%$^{2}$George Washington University, Washington DC, 20052, USA\\
%% (email:tlan@gwu.edu). \\
%$^{3}$AT\&T Research Labs, Bedminster, NJ 07921, USA
}
% (e-mail:{yxiang,mra,chen}@research.att.com)}
	\maketitle

\begin{abstract}

The demand for real-time cloud applications has seen an unprecedented growth over the past  decade. These applications require rapidly data transfer and fast computations. This paper considers a scenario where multiple IoT devices update information on the cloud, and request a computation from the cloud at certain times. The time required to complete the request for computation includes the time to wait for computation to start on busy virtual machines, performing the computation, waiting and service in the networking stage for delivering the output to the end user. In this context, the freshness of the information is an important concern and is different from the completion time. This paper proposes novel scheduling strategies for both computation and networking stages. Based on these strategies, the age-of-information (AoI) metric and the completion time are characterized. A convex combination of the two metrics is optimized over the scheduling parameters. The problem is shown to be convex and thus can be solved optimally. Moreover, based on the offline policy, an online algorithm for job scheduling is developed. Numerical results demonstrate significant improvement as compared to the considered baselines.
\end{abstract}

\begin{IEEEkeywords}
Age of information, Real-time
applications, Cloud computing, Completion time, Data freshness,  Vehicular networks. 
\end{IEEEkeywords}

\section{Introduction}

%The demand for real-time cloud services has witnessed a surpassing growth in the last decade \cite{sarathy2010next}. This huge demand is driven by the emerging applications, e.g, sensor monitoring, autonomous vehicles, unmanned aerial vehicular (UAV) networks, Internet of things (IoT), cyber-physical systems, and any timely-mission-critical systems, that require not only a rapid response time and high throughput but also a fresh information delivered in a timely fashion \cite{sun2017update}.
%In these time-sensitive applications, not only performance (e.g., throughput and latency) of the computing systems is of a major concern but also the freshness (e.g., age) of the delivered information is of a primary concern.  

Traditional cloud services have been mainly designed to achieve high throughput and small delay \cite{bedewy2016optimizing,sarathy2010next}. However, these performance measures fail to capture the timeliness of the information from the application perspective which is important for real-time cloud, and IoT (Internet-of-things), applications. For instance, a message with stale information is of a little value, even if it is promptly delivered to the destination. In contrast, fresh information would have much greater merit, even if it is somewhat  delayed. As an example, to perform rate adaption or resource scheduling in cellular networks, a timely channel state information is important especially in rapid varying channels, which is a typical case in new 5G technologies such as Millimeter wave systems \cite{joint013}. Further, when navigating directions in an autonomous vehicle, timely information of other vehicles is important for the computation. This paper aims to provide scheduling strategies for computation and networking so as to obtain timely information in addition to prompt delivery of information. 

 The freshness of delivered information is measured by a new metric, so-called, age-of-information (AoI) \cite{kaul2011minimizing,sun2017update, huang2015optimizing, chen2016age, kam2013age}. AoI metric measures the freshness of information updates, which is typically defined as the time  elapsed since the last delivered information update was generated (at the source). The AoI of a particular job increases linearly as time goes by until the job has been fully processed by the server and delivered to the target destination \cite{kaul2012real}. Upon reception, the age drops to the time elapsed since the generation of the information. We note that the AoI is different from completion time because AoI  depends on the freshness of information, and not the overall time taken from request to the delivery.

 Unlike packet-centric measures like throughput or latency, AoI is a destination-centric metric which makes it more appropriate to characterize the freshness and timeliness of information updates. We note that age is fundamentally different from traditional performance measures as of throughput and latency. To reduce the latency, information updates should not be sent frequently to reduce the system overload and to avoid system's congestion (i.e., resulting a low latency). Nonetheless, the information at the destination could be stale due to a lack of fresh updates. Intuitively, the higher the rate updates, the fresh the information is (i.e., age will be small since more fresh updates are executed and delivered).
 In contrast, the increased update rates (i.e., a larger throughput) would make the load closer to the server's capacity, and thus can increase the backlog of the queue. As a result, requests may have larger delay and hence, the information again becomes stale when arrived at destination. Thus, minimizing age is fundamentally different from maximizing throughput or minimizing latency.

Recently, AoI metric has been studied for multiple applications, including  vehicular networks \cite{kaul2011minimizing,kaul2012real}, feedback for wireless channels \cite{costa2015age,talak2018optimizing}, cloud gaming \cite{yates2017timely}, and mobile caching \cite{kam2017information}. 
% Recently, AoI metric has received great attention to study the freshness of the delivered information for cloud services and rapid deployment applications \cite{sun2017update}.
%Such applications include vehicular networks \cite{kaul2011minimizing}, feedback for wireless channels \cite{costa2015age,talak2018optimizing}, cloud gaming \cite{yates2017timely}, and mobile caching \cite{kam2017information}. 
 In \cite{kaul2011minimizing}, for example, AoI is investigated for vehicular networks where a single link, modeled as a single queue, is considered. It is shown that age can be minimized by controlling the queueing discipline, e.g., using last-in-first-out (LIFO) scheme rather than simple First-In-First-Out (FIFO) scheme. However, this analysis was limited to a single link. 
In \cite{kaul2012real}, similar analysis for characterizing AoI for a single link was performed under different queueing disciplines including M/M/1, M/D/1, and D/M/1. This analysis is generalized to multicast in \cite{huang2015optimizing}, considering M/G/1 and G/G/1 queues. The impact of packet drops on age is studied in \cite{chen2016age}, while the effect of out of order deliveray on AoI is provided in \cite{kam2013age}. Further, age was investigated for different arrival and service time distribution in several studies, see for example \cite{kaul2012status} and references therein. 
However, there is no work that considers computing of the jobs on the data center and the delivery of  the output to the end user, to the best of our knowledge. This is the focus of the current paper where the trade-off between completion time and age of information is investigated.

%In the above work, the execution time (service time) as well as the networking  (i.e., routing the data to the corresponding edge for transmission) time is assumed to be uncontrollable and, thus, not optimized to reduce the age of cloud services. However, the processing (execution and networking) time of a job can be optimized to further improve the age. Consider completion time\footnote{time required for any particular job to be completely processed.} and freshness of a job. To minimize the completion time, jobs should be completed as soon as possible to reduce/avoid congestion and overloaded systems, which also results in reducing the delay. However, the information at the destination could be stale and useless as it is not timely arrived. One possible reason for a job to not arrive on time is because the long queueing time that a job may spend in the networking queue before transmission. The freshness of jobs could be improved by prioritizing recent updates over stale onces. On the other hand, this would lead to increase the completion time of some jobs since they may not be directly transmitted after execution and thus queued for some time. Hence, separately optimizing each one could lead to conflicting decisions. Thus, joint optimization for completion time and information freshness is crucial to better understand the tradeoff between them and then come up with efficient algorithms to optimize such tradeoff for real time cloud services. The focus of this paper is to understand such tradeoff and develop simple yet efficient solutions to compromise such tradeoff. 

Consider a scenario where different sensors (IoT devices) are updating the content at the server. Each sensor also requests a computation at a certain rate, which includes the information from multiple sensors. Since the data center consists of multiple virtual machines (VMs), a VM must be selected for each job to perform computation. Each computation VM has a queue for processing jobs and after the computation is performed, the output must be transmitted to the user. In order to send the computed result, the result waits in a networking queue which is followed by transmission. This can be applied for navigation of autonomous vehicles where the job of one vehicle depends on the update of other vehicles. The computation and the networking are time consuming. The results are needed with a low completion time. In addition to that, the freshness of the result is also important. We weigh the completion and the freshness of information of different jobs in order to design the scheduling strategies for both the computation and the networking phases. 

We note that the time in the networking phase impacts both the AoI as well as the completion time. The objective of the networking phase scheduling is to minimize the weighted completion time of the jobs. If the arrival rate in the networking phase for each type of jobs is Poisson, priority scheduling is optimal \cite{pinedo2016scheduling}, where the priority is based on the weighted shortest expected processing time (WSEPT) of the jobs. However,  the networking phase uses the  output of the computation phase scheduling and the arrival may not be Poisson in general for the networking phase even if different types of jobs arrive to the system as Poisson process. Thus, the scheduling for the computation phase is important. Since the computation phase has multiple VMs, even the weighted completion time is an NP-hard problem \cite{pinedo2016scheduling}. Further, the AoI metric does not depend on the waiting time in the queue of the VM making the problem challenging. Since the problem of finding optimal assignment is challenging, we use a probabilistic scheduling approach where the VMs are assigned to jobs with certain probabilities which can be optimized for improved performance. Such scheduling approaches have been used in \cite{xiang2016joint}. This approach results in the arrival distribution for the different types of jobs in the networking phase also being a Poisson process.

We aim to fundamentally understand the  tradeoff between completion time and freshness of information. We  formulate the problem as a convex combination of both average completion time and mean AoI metrics. The key parameter is the probabilities of scheduling the VMs in the computation phase.   Thus, we optimize this convex combination over the choice of probabilistic scheduling parameters. The optimization problem is shown to be convex and thus can be efficiently solved using any convex optimization solver. %Further, upon completing the execution phase of a job, we show that the weighted shortest expected remaining time (WSERT) policy is optimal for scheduling the completed jobs for transmission. 
Numerical results demonstrate significant improvement
of AoI metric as compared to the considered baselines. In particular, our proposed approach shows an improvement of 25\% better than the most competitive baseline.
 The
key contributions of our paper are summarized as follows.    

\begin{itemize}[leftmargin=0.3cm,itemindent=.3cm,labelwidth=\itemindent,labelsep=0cm,align=left]
	\item We consider a scenario where multiple sensors update the information in the data center. Further, each sensor requests for computation at a certain rate where the information from other sensors is needed. Both the computation and transmission of the output are considered together. This is the first paper, to the best of our knowledge, that considers AoI as well as completion time of the jobs in such IoT  applications. 
	
%	propose a novel  framework for scheduling jobs aiming to optimize a convex combination of completion time and age of information for real-time cloud services.

	\item A novel  probabilistic scheduling policy is proposed  to assign each job to VMs so that the overall objective is achieved. Based on the scheduling approaches, the mean AoI and the average completion time are characterized.

	\item  A holistic optimization framework is developed to optimize a convex  combination of a weighted sum of completion time and age. The problem is shown to be convex and thus can be efficiently solved using Projected Gradient Descent Algorithm.  Further, based on the offline algorithm, an online version is developed to keep tack of the system dynamics and thus improves the system performance. The online algorithm is evaluated on  MSR Cambridge public Traces  \cite{narayanan2008write}, and show that online algorithm achieves comparable performance to the offline algorithm (which uses Poisson arrival distribution with known parameters). 
	\item Numerical results demonstrate a significant improvement of AoI metric as compared to the considered baselines. Further, an efficient tradeoff point between the two metrics can be chosen such that a reasonable level of staleness can be tolerated by the application.
	
\end{itemize}

The rest of this paper is organized as follows. We provide the system description in Section \ref{sysModel}, 
where we also characterize the age metric, probabilistic scheduling and queueing model.
Age analysis is presented in Section \ref{AoI_analysis}. Further, in this section, we provide  expressions for completion time and expected age of information, for any job, at the user's side. In Section \ref{proForm}, we formulate the optimization problem and present the proposed algorithm to optimally solve the joint convex combination of age and completion time metrics. 
Numerical results are presented in Section \ref{simRes}, and we conclude in
Section \ref{conc}.

\section{System Model}
\label{sysModel}

\subsection{Target Systems: Information-update-based Systems}

Our system is motivated by the unprecedented growth in the demand for real-time IoT based cloud computing applications. These applications are carried over distributed stream processing computation frameworks, e.g., Apache Storm/Spark \cite{iqbal2015big}. Many emerging applications, such as vehicular networks \cite{kaul2011minimizing} and cloud gaming \cite{yates2017timely}, do not require only rapidly computing data streams but also need fresh and accurate information. For these applications, not only the performance (e.g., throughput and completion time) of the computing systems is of a major concern but also the freshness (e.g., age) of the delivered results is of a primary concern \cite{sun2017update}. Clearly, optimizing each one of these dimensions (completion time and age) separately would lead to different set of outcomes, possibly conflicting decisions. To further illustrate, let us consider age of information and completion time of a certain job. A service provider needs to complete jobs as soon as possible to avoid congestion. On the other hand, the applications need a fresh information to make right decisions. Therefore, joint optimization for both metrics is very important to better understand the tradeoff among different dimensions and, then, come up with simple yet efficient algorithms that optimize such tradeoff for real time cloud applications.
The focus of this paper is to investigate such tradeoff and to develop solutions that will address the different angles in such design problems. 
 Next, we explain the update process and some related assumptions.

%Next, we explain our system model 
%\subsection{System Description}

\subsection{Information Updates}

%Without loss of generality, we assume, hereafter, that the jobs $j$'s are a set of requests sent by users 
%who are running Global  Positioning System (GPS) devices on their 
%/vehicles. 

We assume that there are $J$ users/vehicles in the system. Users want to get the most fresh versions of the route-related data to be aware of any sudden change that would occur on their ways to destinations. In vehicular networks, exchanging positions, velocity, direction to destination and control information in real time is critical to safety and collision avoidance \cite{kaul2011minimizing}.  
%This problem is known as a scheduling of real-time traffic. 
To achieve this goal, each user (vehicle) has not only to know its local information but also to know the status (e.g., position, velocity, etc) of  a set (or all) of other vehicles that help determine the route information. Each vehicle updates the data on the cloud at certain rate, and requests computation for the tasks at certain rate.  Since the number of vehicles is typically large and the tasks need to be executed as soon as they are requested, we assume that 
% Since $N$ could be large, 
%% an-up-to date 
%%
%%who are deriving cars and requesting some information updates. In particular,  
%%users/jobs are {\it autonomous cars} where each car 
%%
%%$i$ is sending its local information  to a remote server to periodically update its data and to maintain "up-to-now" recent information. Each car
every device/vehicle sends its recent status data (e.g., location, speed, direction,  etc.) asynchronously to the server to update its stored information. This information is updated frequently and the server has to maintain the last updated information to perform certain tasks whenever required/needed. Such real-time information updates are also found in many similar applications including control systems, autonomous cars, sensor networks, on-line gaming,  hazard signals and any timely-status update systems.

We assume that the set of vehicles is $\mathbb{J}= \{1, \cdots, J\}$. Let $j$ denote a vehicle in the set $\mathbb{J}$, i.e., $j\in \mathbb{J}$.  We further assume that the information updates, for every vehicle, follow a Poisson process with rate $\mu_j >0$.
%for a GPS device that sends a job $j$ for execution. 
This information updating process is independent across the different vehicles $j$'s and, thus, the inter-update time, for every vehicle $j$, at the server is exponentially distributed with rate $\mu_j$. Note that inter-update time is the time required for the server to receive the updated-information of the vehicles. In such a shared environment, the server has to use not only requested vehicle's information  to take an action (e.g., change the direction or the speed) but needs to know all nearby vehicles information and the data of the vehicles in its route to determine the future decisions and/or consequently its path to the destination. Thus, besides having a high speed server, a centralized system with global information of a subset (or all) of vehicles in the environment is needed to perform the computing tasks. In order to do such computationally expensive tasks, a central unit has to first collect all necessary information from the agents (vehicles) and then performs the needed tasks on the fly. Further, global information of all related vehicles has to be available at the central unit (e.g., server farm) so that the decision is accurately taken.  For example, to change the direction or the route of a particular vehicle, in case of a congestion, collision, or even for a temporary blocked route, the exact locations and the status of other vehicles in the way to destination have to be efficiently exploited to figure out what exact process needs to be calculated, e.g.,  the required  time to arrive, new velocity, direction, etc. Thus, the server has to use the most recent/fresh information to properly generate certain decisions. In the next subsections, we explain our models for computing (i.e., performing the computational part of the the job) and networking (routing the data back to the corresponding vehicle).

\subsection{System Parameters and Computing Model}

\textcolor{black}{
\begin{figure}[t]
	\centering\includegraphics[trim=0.5in 0.15in 2.05in 0.1in, clip,width=0.491\textwidth]{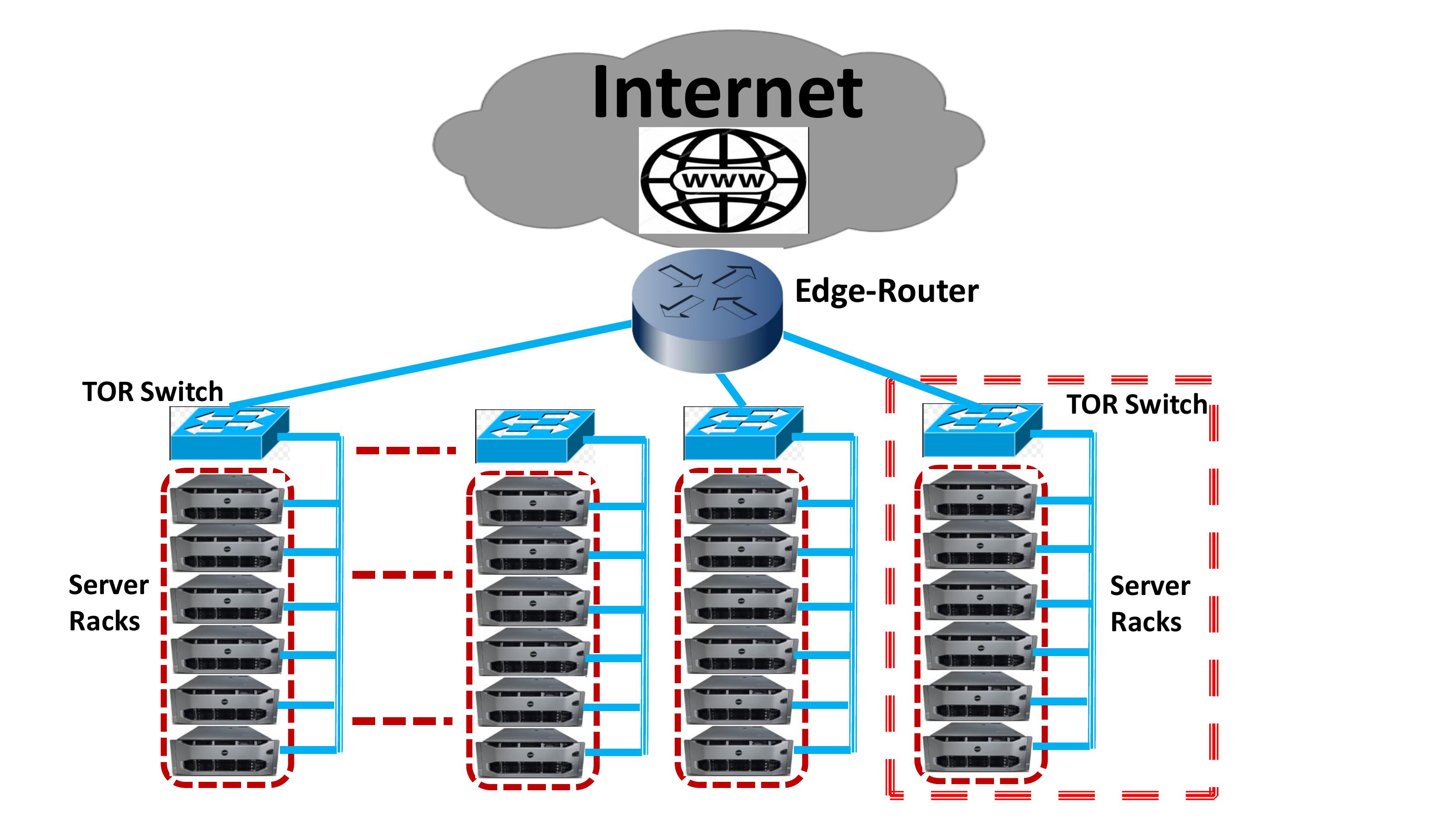}
	\caption{\textcolor{black}{A schematic illustrates a server farm, composed of several server racks, top-of-rack switches (TORs), and an edge-router. While server farms can have more layers of TOR switches and/or edge-routers, we abstract our infrastructure in a per-site basis, without loss of generality,  to only one layer of TORs.
		\label{sysModel1}}}
	%\vspace{-.20in}
\end{figure}
}

\textcolor{black}{
	\begin{figure}[t]
		\centering\includegraphics[trim=0.00in 0.07in 0.07in 0.1in, clip,width=0.50\textwidth]{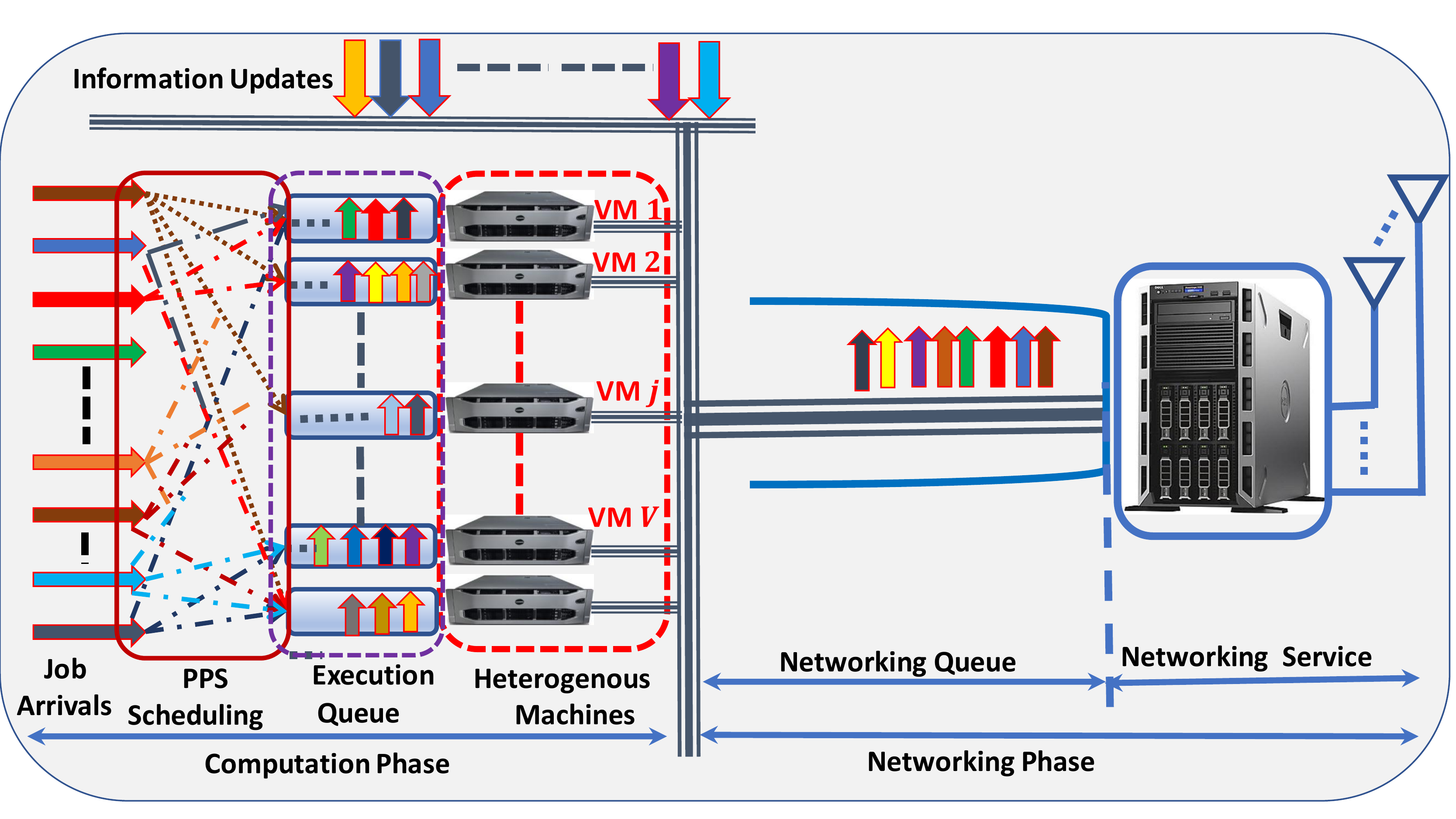}
		\caption{\textcolor{black}{System model diagram showing the different stages per phase. 
		\label{sysModel_zoomIn}}}
		%\vspace{-.20in}
	\end{figure}
}
We consider an information-update system, composed of a single server equipped with multiple heterogeneous cores or machines (i.e., server farm\footnote{A server farm (or server cluster) is a set of computer nodes that are maintained by an organization to provide server functionality far beyond the capability of a single server.}). \textcolor{black}{ A typical schematic for a server farm is shown in Figure \ref{sysModel1}. This data center is composed of one edge-router connected to several top-of-rack switches, each of which is associated with multiple server racks. Without loss of generality, we consider, in the rest of this paper, only one TOR switch connected to multiple servers/cores or VMs (depicted by the red dashes in Figure \ref{sysModel1}). However, as show in Section \ref{ext}, our analysis can be easily extended to accommodate the scenarios where several hierarchical layers of routers and/or TOR switches are found. The system model and the different phases are captured in Figure \ref{sysModel_zoomIn}. As shown in the figure, every job passes through two phases: computation phase and networking phase. Further details on each phase will come later on in the sequel. }   
We assume that each vehicle $j$  requests a computation at a rate of $\lambda_j$ from the data center. Each job from vehicle $j$ requires information from multiple vehicles in the set ${\mathbf I}_j\subseteq {\mathbb J}$. 

% and a set of $\mathbb{J}= \{1, \cdots, J\}$ jobs. Let $j$ denote a job in the set $\mathbb{J}$, i.e., $j\in \mathbb{J}$. Each job $j$ is requested at a rate of $\lambda_j$ from the data center. In order to compute the output for job $j$, we require information from multiple jobs in  

The jobs from these vehicles are sent by users to perform some tasks on the server. Let $\mathbb{V}=\left\{ 1,2,\ldots,V\right\} $ be the set of heterogeneous virtual machines (VMs) in the server farm, where $v$ denotes a VM in the set $\mathbb{V}$, i.e., $v\in \mathbb{V}$.  Each job from vehicle $j$ has a fixed size $D_{j}$ so that it takes $D_j\times t_v$ units of time to complete, where $t_v$ is the time taken if assigned to machine $v$ for a unit-sized job. \textcolor{black}{
We note that $D_j$ has a heavy tail and follows a Pareto distribution with parameters $u_m,\gamma$ with shape parameter $\gamma>1$, implying finite mean and variance. Thus, the complementary distribution function of $D_j$ is given as 
\begin{equation}
	\mathbb{P}\left(D_{j}>u\right)\leq\begin{cases}
	(u_{m}/u)^{\gamma} & u\geq u_{m}.\\
	0 & u<u_{m}.
	\end{cases}
	\end{equation} 
	For $\gamma$, the mean is $\mathbb{E}\left[D_{j}\right]=\gamma\,u_{m}/(\gamma-1)$.
}
The job sizes reflects the time to perform the computation of the job on a server. Higher the job size, higher is the computational time.  The jobs can be assigned to any VM $v$ to process. Further, jobs are assumed to be non-preemptive so jobs cannot be interrupted 
if they are already in service. 
In order to serve a request from a vehicle, we first need to choose one VM $v$ to perform the computation task. The selection process of VMs is a challenging task as it needs to take into consideration many factors including the queue of each VM as well as the running jobs that are not fully executed yet. In Section \ref{schedComp}, we will explain our proposed scheduling policy and will show how this policy is optimized to reduce the completion times of the jobs. Figure \ref{timeLinejobj} shows the two stages of computing phase. As shown in the figure, a job $j$ waits some time $(W_{1,j,v})$ in the queue of VM $v$ before it is being served. Then, it spends some time in service, denoted as $S_{1,j,v}$. Once a job $j$ has been fully served, it is then directed to the  networking phase where the output of the computing should be routed to its destination.  
%
%
% assign probabilistically one VM to perform the vehicle computation task. Hence, a VM $v$ is assigned to serve a job $j$ with probability $p_{j,v}$. Since the key bottleneck is the number of VMs,  jobs have to wait in the queue until the VM is free and then can serve them. Thus, if the VMs are busy computing other tasks, the incoming
%requests have to wait in the queues. Moreover, we assume that requests at the queue of each VM are served in order of the requests in a FIFO fashion. Under probabilistic scheduling, the arrival of job requests at VM $v$ forms a Poisson process with rate $\Lambda_v= \sum_{j} p_{j,v} \lambda_j$, which is the superposition of $J$ Poisson processes each with rate $p_{j,v} \lambda_j$.

We also assume that the computational (service) time of a job $j$ at VM $v$ follows a shifted exponential distribution. This can be seen for instance from the experiments in \cite{7274674}, where the job completion time plots can be seen to follow a shifted exponential distribution. The shift represents that each job takes large time to complete, while the exponential part is motivated by the randomness in the background and other processes that make the run time non-deterministic.  Thus, the distribution of computation time for a  job of vehicle $j$ is given by 
\begin{equation}
f_{v,j}(x)=\begin{cases}
\alpha_{v,j}e^{-\alpha_{v,j}\left(x-\beta_{v,j}\right)} & x\geq\beta_{v,j}\\
0 & x<\beta_{v,j} \label{sExp}
\end{cases}
\end{equation}
where $\alpha_{v,j} = \alpha_v/{ D}_j$ and $\beta_{j,v} = \beta_v { D}_j$. Further, the expected processing time of a job $j$ is ${ D}_j (\beta_v + 1/\alpha_v)$, $\beta_v$ represents the shift and $\alpha_v$ represents the rate of the exponential random part.

%$\alpha_v$ is the rate  of the VM $v$ at which the tasks are executed and  $\beta_{v}$ is the initial time needed to fetch the request. Note that the value of $\beta_{v}$ increases in proportion to the task size and the value of $\alpha_{v}$ decreases in proportion to the task size in the shifted-exponential compute time distribution. We further note that the exponential distribution is a special case of the shifted
%exponential distribution which follows by setting $\beta_{v}=0$.

\subsection{Networking Model}

When a job $j$ departs from the VM $v$ after execution, it enters the second (networking) phase where all completed jobs are placed into a common queue. This phase is called the networking phase where each job is queued for service in order to be sent back to the target vehicle. This phase also has two components: waiting for service in the queue $W_{2,j}$ and service time $S_{2,j}$. Figure \ref{timeLinejobj} depicts the two stages of the networking phase of any job $j$. 

We  assume that the jobs are non-preemptive so if a job is already on service it cannot be interrupted and will last running till it completes. We will assume that different jobs have different weights, thus FCFS is not an optimal scheme for sending the results of computation to the users.  We will later show (Lemma \ref{theo1}) that for ideal computing phase, a priority queuing will be optimal for  the metric which weights the AoI and completion time of jobs. 

%Note that here the jobs should be efficiently scheduled in order to minimize the AoI for all jobs. To say it differently, jobs that carry the most recent information should be prioritized over other jobs (that are stale). Thus, intuitively, FCFS is not a good scheme as some jobs that contain fresh information may wait for other jobs (that arrive earlier to the queue) whose information are stale.  In Section \ref{netSched}, we explain our policy for scheduling jobs in the networking phase where jobs are prioritized based on their age. This policy is shown to be optimal in a sense that weighted AoI for all jobs is minimized. We also assume that the jobs are non-preemptive so if a job is already on service it cannot be interrupted and will last running till it completes. 

We assume that the output of jobs from vehicle $j$ after computation require downloading ${\mathbf E}_j$ amount of data. We also assume that the networking service time of a job $j$ follows a shifted exponential distribution, as it has  been shown in real experiments, see for instance Tahoe system \cite{xiang2016joint} and Amazon S3 \cite{chen2014queueing}. We assume that the distribution of networking time for a  job of vehicle $j$ is given by 
\begin{equation}
f_{j}(x)=\begin{cases}
\gamma_{j}e^{-\gamma_{j}\left(x-\zeta_{j}\right)} & x\geq\zeta_{j}\\
0 & x<\zeta_{j} \label{sExp2}
\end{cases}
\end{equation}
where $\zeta_j = \zeta{\bf E}_j$ and $\gamma_{j}=\gamma/{\bf E}_j$ are the shift and the rate for the shifted exponential distribution, respectively.
% 
% \textcolor{black}{
%Recall that $D_j$ has a heavy tail and follows a Pareto distribution with parameters $u_m,\gamma$ with shape parameter $\gamma>1$, implying finite mean and variance. Thus, the complementary distribution function of $D_j$ is given as 
%\begin{equation}
%\mathbb{P}\left(D_{j}>u\right)\leq\begin{cases}
%(u_{m}/u)^{\gamma} & u\geq u_{m}\\
%0 & u<u_{m}
%\end{cases}
%\end{equation} 
%For $\gamma$, the mean is $\mathbb{E}\left[D_{j}\right]=\gamma\,u_{m}/(\gamma-1)$
%}

\textcolor{black}{
	\begin{figure*}[t]
		\centering\includegraphics[trim=0.00in 0.0in 0.00in 0.0in, clip,width=0.81\textwidth]{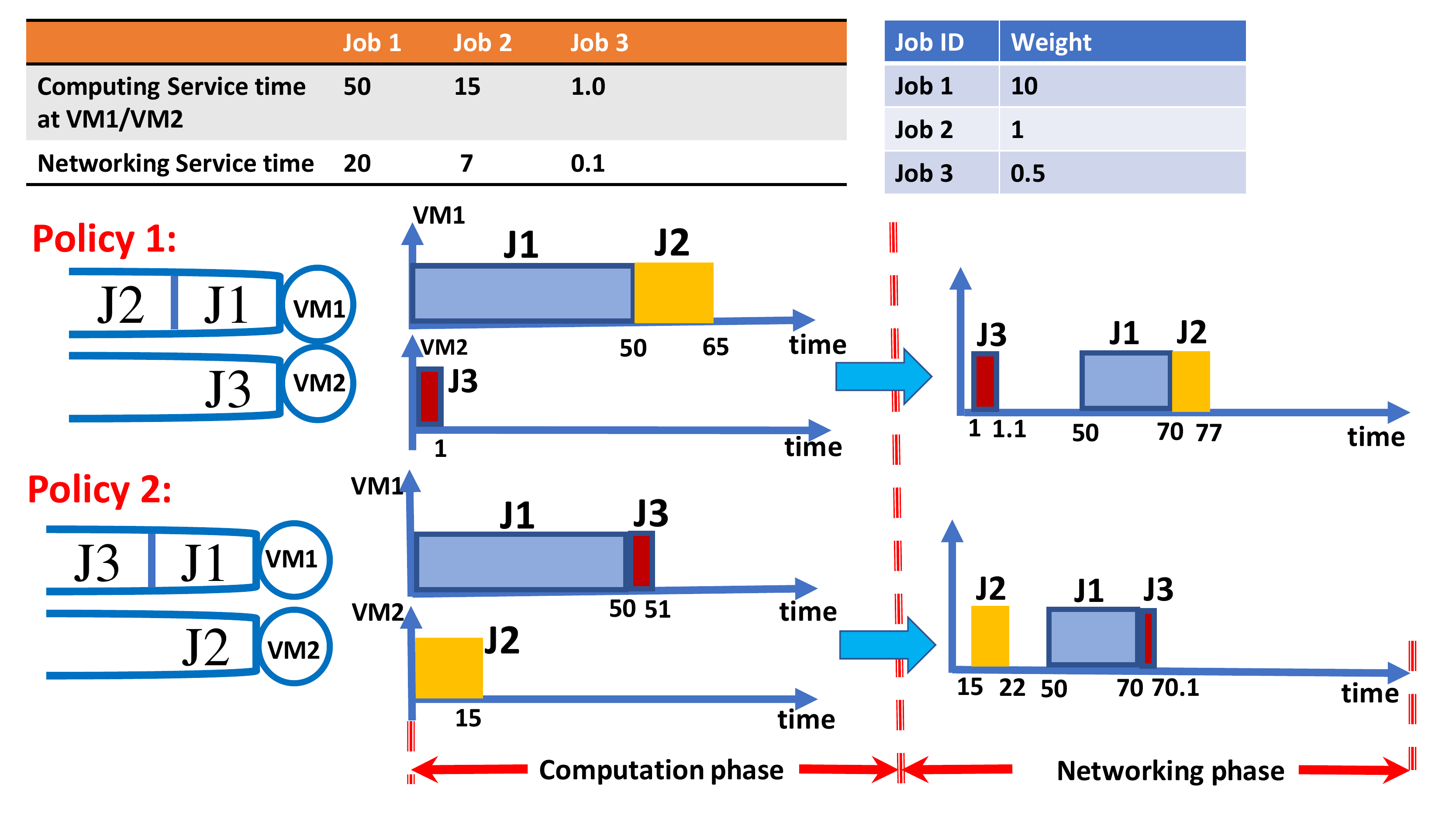}
		\caption{\textcolor{black}{Example illustrates the tradeoff between AoI and computation time metrics. There are three jobs ($J1$, $J2$, and $J3$) and two VMs (VM1 and VM2) for computation, followed by a common queue for networking. The computation times, $C_J$'s, for jobs, are shown in the figure and $C_j$'s are assumed to be equal on both machines. The two phases (computation and networking) and the timelines of the jobs for the two considered policies are depicted in the figure above.  
				\label{example}}}
		%\vspace{-.20in}
	\end{figure*}
}

\subsection{Problem Formulation}

Having defined the system model, we now define the notions of completion time and the AoI metric. 

We note that the completion time for a job depends on the four stages, (i) Waiting for computation in the queue of assigned VM $v$, ${\mathbf W}_{1,j,v}$, (ii) Computation of the job from VM $v$, ${\mathbf S}_{1,j,v}$, (iii) Waiting in the networking queue for transmission to the end user, ${\mathbf W}_{2,j}$, and (iv) Service time for networking, ${\mathbf S}_{2,j}$. The completion time for a job from vehicle $j$ is thus given as 
\begin{equation}
{\bf C}_j = {\mathbf W}_{1,j,v}+ {\mathbf S}_{1,j,v} + {\mathbf W}_{2,j} + {\mathbf S}_{2,j}.
\end{equation}

Further, the AoI metric is determined by the update process. We note that the updates can keep happening before the computation starts. Thus, the AoI does not depend on the time taken in the queue of the VM. Let $c_j$ be the time that computation of job $j$ starts and $l_i\le c_j$ be the last time that an update from vehicle $i$ is received before $c_j$. The AoI is thus composed of four phases, (i)  $c_j - \max_{i\in {\bf I}_j}l_i$, which is the time of the last update of the most stale vehicle information required, ${\mathbf Y}_j$, (ii) Computation of the job from VM $v$, ${\mathbf S}_{1,j,v}$, (iii) Waiting in the networking queue for transmission to the end user, ${\mathbf W}_{2,j}$, and (iv) Service time for networking, ${\mathbf S}_{2,j}$. The AoI  for a job from vehicle $j$ is thus given as 
\begin{equation}
{\bf A}_j = {\mathbf Y}_{j}+{\mathbf S}_{1,j,v} + {\mathbf W}_{2,j} + {\mathbf S}_{2,j}. \label{ageEqn}
\end{equation}

We note that ${\mathbf E}[{\mathbf Y}_{j}]$ only depends on $\mu_j$'s and is a constant (independent of any scheduling strategy). Due to exponential updates, the distribution of $[{\mathbf Y}_{j}]$ is the maximum of the update distributions for each vehicle in ${\bf I}_j$. Thus, we ignore ${\mathbf Y}_{j}$ in the computations. 

\textcolor{black}{
{\it Age and completion time tradeoff:}
Figure \ref{example} shows two different policies where Policy 1 gives lower weighted age, while Policy 2 gives lower weighted completion time. In this example, we assume two VMs and three different jobs whose service times for computation are $50$ sec, $15$ sec, and $1$ sec, for job 1, job 2, and job 3, respectively. Further, the networking service times for job 1, job 2, and job 3 are, respectively, $20$ sec, $7$ sec, $0.1$ sec. We assume that the service times are deterministic for this example illustration, which is a special case of shifted exponential distribution assumed in the paper.  As shown in the figure, using policy 1 (policy 2), the completion times for jobs 1, 2, and 3, are $70$ ($70$), $77$ ($22$), and $1.1$ ($70.1$), respectively. Using equation \eqref{ageEqn}, we can calculate the age ($A_i$ for job $i$), for every job. Thus, for policy 1, we have $A_{1}=70$, $A_{2}=27$, and $A_{3}=1.1$, while the ages of jobs using policy 2 are $A_{1}=70$, $A_{2}=22$, and $A_{3}=20.1$. Note that the completion time and age for job $1$ are the same in both policies and thus we ignore it in the following comparison. Our objective is to minimize a weighted sum of ages and completion times of jobs, $\sum_j(w_j {\bf C}_j + g_j {\bf A}_j)$. Hence, for policy 1, assuming $g_i=w_i$ and $j\in \{1,2\}$,  we have $\sum_j(g_j {\bf A}_j)=27.550$, and that for policy 2, we can show that $\sum_j(g_j {\bf A}_j)=32.05$. On the other hand, the weighted completion time for policy 1 is $\sum_j(w_j {\bf C}_j) = 77.55$ and similarly for policy 2, we have $\sum_j(w_j {\bf C}_j) =57.05$. We can see that Policy 1 achieves lower weighted age while Policy 2 obtains lower weighted completion time. Therefore, based on the chosen policy we get different set of outcomes. Hence, an efficient tradeoff point between age and completion time can be designed based on the level desired by the application such as the tolerable level of staleness in the delivered information. 
}

Having defined the completion time and the AoI metric, we aim to minimize the weighted completion and AoI metric. Thus, the problem formulation is to minimize $\sum_j(w_j {\bf C}_j + g_j {\bf A}_j)$. Ignoring  ${\mathbf Y}_{j}$, the above reduces to minimizing $\sum_j \left( w_j{\mathbf W}_{1,j,v}+ (w_j+g_j){\mathbf S}_{1,j,v} + {\mathbf W}_{2,j} + {\mathbf S}_{2,j}\right)$.

%The shift and the service rate are denoted by $\zeta$ and $\gamma$, respectively. Note that the shift $\zeta$ mainly represents the time needed to read/write content, access a particular object,..etc.  

\section{Proposed Scheduling Approach}

In this section, we explain our proposed strategies for job scheduling both for the computation and networking stages. %We first start by explaining the proposed strategy for job assignment to servers so that the computation time is optimized. Next, we present our scheme for routing the jobs to destination after execution such that AoI is minimized. 

%\subsection{Probabilistic Scheduling and Prioritized Queuing Model}
\subsection{Scheduling for Computing}
\label{schedComp}

We now describe a feasible job scheduling policy that would have parameters that can be used to optimize the proposed metric. Upon arrival of jobs at the server, a VM has to be chosen to perform the task. The optimal scheduling policy has to consider the queue state and all tasks that are not executed yet. While one can use a Markov decision process with multiple states, this approach is not tractable and will result in, so-called, state explosion problem \cite{xiang2016joint}. Further, this approach will not give formulas that can be optimized to determine the job assignments and optimal resource allocation of the server farm. To overcome these issues, we propose feasible scheduling to jointly consider all different design parameters. Hence, to provide prioritized service levels, we propose a prioritized probabilistic policy (PPS) as follows. Each VM $v$ has its own queue and the jobs in each queue are served under First Come First Serve. A job $j$, $j=\{1,2,3,\ldots,J\}$, is assigned to the queue of a VM $v$, $v=\{1,2,\ldots,V\}$, with probability $p_{j,v}\geq0$ and for any job $j$, the following condition has to be satisfied for feasibility of scheduling process
\begin{equation}
\sum_{v=1}^{V}p_{j,v}=1,\,\,\forall j. \label{sum_p}
\end{equation}

In order to serve a request from a vehicle, we first assign probabilistically one VM to perform the  computation task.  VM $v$ is assigned to serve a job $j$ with probability $p_{j,v}$. Since the key bottleneck is the number of VMs,  jobs have to wait in the queue until the VM is free and then can serve them. Thus, if the VMs are busy computing other tasks, the incoming
requests have to wait in the queues. Moreover, we assume that requests at the queue of each VM are served in order of the requests in a FIFO fashion. Under probabilistic scheduling, the arrival of job requests at VM $v$ forms a Poisson process with rate $\Lambda_v= \sum_{j} p_{j,v} \lambda_j$, which is the superposition of $J$ Poisson processes each with rate $p_{j,v} \lambda_j$. We present next the proposed scheduling for the networking phase. We note that the AoI depends only on the computation service time, while the completion time depends on both the waiting time in the queue of VM as well as the service time. 
%We also assume that the computational (service) time of a job $j$ at VM $v$ follows a shifted exponential distribution and is given by the probability distribution function which is given by 
%\begin{equation}
%f_{v}(x)=\begin{cases}
%\alpha_{v}e^{-\alpha_{v}\left(x-\beta_{v}\right)} & x\geq\beta_{v}\\
%0 & x<\beta_{v} \label{sExp}
%\end{cases}
%\end{equation}
%where $\alpha_{v}$ is the rate  of the VM $v$ at which the tasks are executed and  $\beta_{v}$ is the initial time needed to fetch the request. Note that the value of $\beta_{v}$ increases in proportion to the task size and the value of $\alpha_{v}$ decreases in proportion to the task size in the shifted-exponential compute time distribution. 
\subsection{Scheduling for Networking}
\label{netSched}
When a job from vehicle $j$ departs  the VM $v$ after completion, it enters another queue waiting to be sent back to the target vehicle. To further illustrate, each job has two phases of service. The first phase is the computation phase and the second phase is the networking phase. In the first phase, the jobs are assigned to the VMs probabilistically  to run the tasks as described earlier. %This phase itself has two components: waiting for execution and the time of execution. 
Once the  job from vehicle $j$ finishes the first phase, it enters the networking phase where all completed jobs are placed into a common queue. This phase  has two parts: waiting for service in the queue and service time. Figure \ref{timeLinejobj} depicts the different phases/stages of processing time of any job $j$. Since both the AoI and the completion time depend on both the parts, the weights of the two can be combined for completion of networking part. For weighted completion of jobs, FCFS (first come first serve) scheduling may not be optimal. We thus use a priority queuing, which will be shown to be optimal for the networking part (Lemma  \ref{theo1}).

%In order to minimize the AoI, the job scheduling has to take the age into account so that the total AoI is minimized. To do so, jobs that have higher priority should be scheduled first. To say it differently, jobs that carry the most recent information should be prioritized over other jobs (that are stale). Thus, unlike the first queue of the execution phase, the queuing part here should not be first-in-first-out (FIFO)  but rather priority-based queuing where jobs with higher priority are always placed in the head of the queue. We also assume that the jobs are non-preemptive so if a job is already on service it cannot be interrupted and will last running till it completes. Note that priority-based queuing in the networking phase is important to minimize the total age of jobs. In contrast, the assignment of jobs in the first phase is necessary to optimize the completion time of the jobs but not to reduce the AoI. That is because the arrivals of job updates are Poisson and consequently the inter-update time is exponential. As a result, if a job is about to be executed on a VM, the most fresh information will be used to do the computation, thanks to the memoryless property. So, the update process is independent on the waiting time in the queue of the execution phase.
%
% Next, we explain the queuing models for the two tandem phases. 
%  
\begin{figure}[t]
	\centering\includegraphics[trim=1.35in 0.15in 3.5in 0.51in, clip,width=0.491\textwidth]{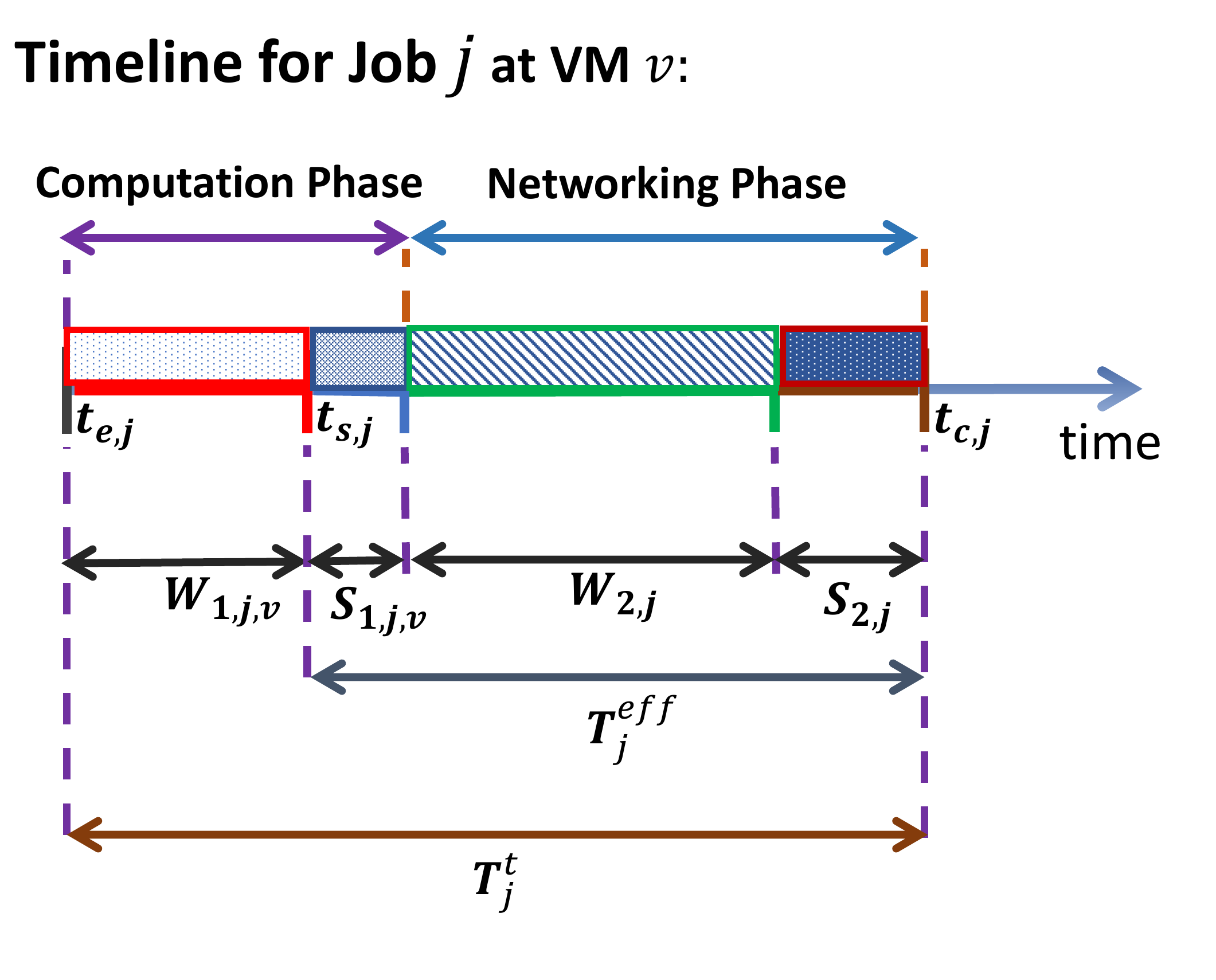}
	\caption{ A schematic illustrates the timeline of a job $j$ at VM $v$.  The computing and networking phases are captured. Further, the different stages per each phase is also shown. Here, $t_{e,j}$, $t_{s,j}$, and $t_{c,j}$ represent the time at which a job $j$ enters the computation queue of VM $v$, starts the compute service, and completes the networking service time, respectively. Further, ${\bf T}_j^{eff}$  corresponds to the effective time that contributes to the AoI and ${\bf T}_j^t$ is the total processing time (computing and networking) of a job $j$ in VM $v$.
		\label{timeLinejobj}}
	%\vspace{-.20in}
\end{figure}

%\textbf{EXAMPLE and Plot a Figure}

%
%
%\textbf{Will continue later on to finish the system model }
%
%
%
%
%\begin{theorem}
%Under the optimal non-preemptive dynamic policy, for a queue with Poisson arrivals and arbitrary processing times, the decision-maker selects, whenever the server is freed, from among the waiting
%tasks one with the highest value of $w_i/R_i$. This implies that the tasks are scheduled
%according to the weighted shortest expected remaining time (WSERT) rule.	
%\end{theorem}
%
%
%
%\begin{theorem}
%In the second queue, the optimal non-preemptive dynamic policy is to select the one with the highest value of $(w_i+g_i)/R_i$.	
%\end{theorem}
% Next, we explain the queuing model and the concept of probabilistic scheduling.
%
%
%\subsection{Queuing Model and Prioritized Probabilistic Scheduling}
%
%We assume the incoming requests of a task $j$ follows a Poisson process with rate $\mu_j$ [cite]. 
%

%Mathematically, the parameters $\beta_{v}$ and $\alpha_v$ are given as 
%\begin{equation}
%\alpha_{v}=w_{v}^{r}\alpha,\,\,\,\,\,\,,\beta_{v}=w_{v}^{s}\beta \label{alpha_v_beta_v}
%\end{equation}
%
%where $\alpha$ and $\beta$ are constant computation time parameters associated with the entire server farm. 

%Having characterized the queuing dynamics of the first phase of execution, the distribution of the arrivals at the queue of the second phase can be found. Since merging Poisson process is also Poisson, the arrivals at the second queue is also Poisson with rate $\Lambda = \sum_{v} \Lambda_v$. 

\section{AoI Analysis}
\label{AoI_analysis}
 In this section, we derive expressions for both the completion time and the AoI metric. The expressions require characterizing the terms ${\mathbf W}_{1,j,v}$, ${\mathbf S}_{1,j,v}$, ${\mathbf W}_{2,j}$, and ${\mathbf S}_{2,j}$. We will split computing the first two terms for the computation stage and the next two for the networking stage in the two subsections. This will be followed by combining them to give expressions for average completion time and mean AoI in the following subsection.

 \subsection{Evaluating terms in the computing phase}

For every VM $v$ and based on the PPS policy, this phase of the system can be modeled using an M/G/1 queuing model since the arrival of jobs is Poisson (with different
probabilities to each VM queue) and the execution time is general (i.e., shifted exponential) related to the processing time of each job $j$. Recall that the overall arrival rate of the jobs at VM $v\in \mathbb{V}$ is 
\begin{equation}
\Lambda_v= \sum_{j} p_{j,v} \lambda_j \label{arrRate_v}
\end{equation}

Given that the service time distribution is shifted exponential whose expression is given in \eqref{sExp},  it is easy to show that for every job from vehicle $j$ the expected service time at VM $v$ is $\left(\beta_{v,j}+\frac{1}{\alpha_{v,j}}\right)
$. Then, following \cite{ross2014introduction}, the expected service time at VM $v$ is given by 
\begin{equation}
\mathbb{E}\left[\boldsymbol{Z}_{1,v}\right]=\sum_{j=1}^{J}\frac{p_{j,v}\lambda_{j}}{\Lambda_{v}}\left(\beta_{v,j}+\frac{1}{\alpha_{v,j}}\right)
\end{equation}

Similarly, we can calculate the second moment of the expected service time as follows

\begin{equation}
\mathbb{E}\left[\boldsymbol{Z}_{1,v}^{2}\right]=\sum_{j=1}^{J}\frac{p_{j,v}\lambda_{j}}{\Lambda_{v}}\left(\beta_{v,j}^{2}+\beta_{v,j}+\frac{\beta_{v,j}+2}{\alpha_{v,j}}\right) \label{serSeconMmt}
\end{equation}

Since the arrivals of jobs from vehicle $j\in \mathbb{J}$ are Poisson and the service time is general (i.e., M/G/1), the expected waiting time in the queue for any VM $v\in \mathbb{V}$ is given by 
\begin{equation}
\mathbb{E}\left[\boldsymbol{W}_{1,j,v}\right]=\frac{\Lambda_{v}\mathbb{E}\left[\boldsymbol{Z}_{1,v}^{2}\right]}{2\left(1-\Lambda_{v}\mathbb{E}\left[\boldsymbol{Z}_{1,v}\right]\right)}
\end{equation}

Further, in order for the queuing system to be stable, the arrival rate of jobs from vehicle $j$ at VM $v$ has to be less than its expected service rate, which implies that
\begin{equation}
\Lambda_{v}\mathbb{E}\left[\boldsymbol{Z}_{1,v}\right]=\rho_{1,v}<1,\,\,\forall v \label{rholeq1}
\end{equation}

Further, the expected service time of job of vehicle $j$ from VM $v$ is given as
\begin{equation}
\mathbb{E}\left[\boldsymbol{S}_{1,j,v}\right]= \left(\beta_{v,j}+\frac{1}{\alpha_{v,j}}\right)
\end{equation}

To this end, we have characterized the waiting time and the service time for a job from vehicle $j$ in any given VM $v$ under PPS policy. Then, the average computing (waiting and service) time for a job $j$ in a VM $v$, $\mathbb{E}[\boldsymbol{T}_{1,j,v}^{(c)}]$, is given by 
\begin{equation}
\mathbb{E}[\boldsymbol{T}_{1,j,v}^{(c)}]=\mathbb{E}\left[\boldsymbol{W}_{1,j,v}\right]+\mathbb{E}\left[\boldsymbol{S}_{1,j,v}\right].
\end{equation}

 \subsection{Evaluating terms in the networking phase}

Having characterized the queuing dynamics of the execution phase, the distribution of the arrivals at the queue of the second phase can be found as follows. Since each VM has input process as Poisson and the service time is shifted exponential, the output process from the queue is Poisson. We note that this result does not hold for general service time distributions. However, it has been shown for exponential service time distribution \cite{ross2014introduction}, and it easily follows for shift distribution. Thus, the result follows for the shifted exponential service times. Since superposition of Poisson processes from different VMs is also a Poisson, the arrivals at the second queue is also Poisson with rate $\Lambda = \sum_{v} \Lambda_v = \sum_j \lambda_j$. At this stage, the computation result of the first phase is available and is ready  for transmission. 

%Hence, to minimize the AoI, the scheduling process for the completed jobs has to take the age into account so that the AoI is minimized. To do so, jobs that have higher priority should be scheduled first. Moreover, jobs that carry the most recent information should be prioritized over other jobs. Hence, the queuing process is not first-come-first-serve anymore but, rather, it is a priority-based queuing (PBQ). We note that PBQ is necessary to minimize the AoI of all jobs. As will be shown later, a queuing model that will take age into account will be optimal. Note that the total work in the system at any point of time does not depend on the priority rule of queuing as long as the server is always busy when there are jobs waiting for processing \cite{pinedo2016scheduling}. 

We assume that the vehicles are ordered in terms of decreasing $(w_j+g_j)/{\mathbf E}_j$. For minimizing completion time for the networking, it is optimal to have strict priority in the jobs, where the jobs from lower vehicle number have strict priority over the higher ones. Thus, this leads to an M/G/1 system with priority queuing. Thus, the expected waiting time in the queue for a job of vehicle $j$ is given as \cite{pinedo2016scheduling}

%For an M/G/1 with priority queuing, the expected waiting time in the queue for class $v$ is shown to be \cite{harchol2013performance}
\begin{equation}
\mathbb{E}\left[\boldsymbol{W}_{2,j}\right]=\frac{\Lambda\mathbb{E}\left[\boldsymbol{Z}_{2}^2\right]}{2\left(1-\sum_{z=1}^{j}\rho_{z}\right)\left(1-\sum_{z=1}^{j-1}\rho_{z}\right)}
\end{equation}
where $\mathbb{\rho}_{z}=\lambda_{z}\mathbb{E}\left[\boldsymbol{S}_{2,z}\right]
, z \in \mathbb{J}$ is the traffic intensity, and $\mathbb{E}\left[\boldsymbol{Z}^2_{2}\right]
$ denote the second moment of the service time for networking service time. These expressions for $\mathbb{E}\left[\boldsymbol{S}_{2,j}\right]$ and $\mathbb{E}\left[\boldsymbol{Z}^{2}_{2}\right] $ are given as

\begin{align}
\mathbb{E}\left[\boldsymbol{S}_{2,j}\right] & =\left(\zeta_{j}+\frac{1}{\gamma_{j}}\right)\label{serExp2ndQueue}
\end{align}

% which can be easily calculated as in \eqref{serSeconMmt}, i.e., 
\begin{align}
\mathbb{E}\left[\boldsymbol{Z}^{2}_{2}\right] & =\sum_j \frac{\lambda_{j}}{\Lambda}\left(\zeta_{j}^{2}+\zeta_{j}+\frac{\zeta_{j}+2}{\gamma_{j}}\right) \label{eq:ser2ndQueue}%\nonumber\\
%& =\sum_{j=1}^{J}\frac{1}{\Lambda}p_{j,v}\lambda_{j}\left(\zeta_{j}^{2}+\zeta_{j}+\frac{\zeta_{j}+2}{\gamma_{j}}\right).\label{eq:ser2ndQueue}
\end{align}
%where $\zeta$ and $\gamma$ are the shift and speed of the networking server (second queue), respectively. Further, the expected service time for the jobs in the second queue can be written as 

We note that for stability of the queuing system at the second phase, the arrival rate of jobs $j$'s has to be less than its expected service rate, which implies that
\begin{equation}
\Lambda\sum_j \frac{\lambda_j}{\Lambda}\mathbb{E}\left[\boldsymbol{S}_{2,j}\right]<1 \label{rholeq1_2ndQ}
\end{equation}

%Further, the average service time of job of vehicle $j$ in the second queue is given as

\subsection{Expressions for Average Completion Time and Mean AoI}
Now, we are ready to state the main results of this paper. The following theorem proves that the proposed priority (given higher priority for jobs of larger $((w_j+g_j)/{\bf E}_{j})$) will minimize both completion time and AoI for the networking phase.

\begin{lemma}
	\label{theo1}
Ignoring the computation phase, priority queue scheduling which gives a strict priority to the jobs with higher  $(w_j+g_j)/\boldsymbol{E}_j$ optimizes the metric in this paper. Thus,  the jobs are scheduled according to the weighted shortest expected processing time (WSEPT) rule.	
\end{lemma}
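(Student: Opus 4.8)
The plan is to reduce the stated metric to a single weighted mean-waiting-time objective for the networking M/G/1 queue and then prove that the WSEPT priority order minimizes it by an adjacent-interchange (exchange) argument on the priority permutation. First I would isolate the networking contribution: ignoring the computation phase, the only scheduling-dependent quantities are $\mathbf{W}_{2,j}$ and $\mathbf{S}_{2,j}$, and both the completion time $\mathbf{C}_j$ and the age $\mathbf{A}_j$ of \eqref{ageEqn} contain the same networking delay $\mathbf{W}_{2,j}+\mathbf{S}_{2,j}$. Accumulating $\sum_j(w_j\mathbf{C}_j+g_j\mathbf{A}_j)$ over the arrival stream, in which class $j$ is generated at rate $\lambda_j$, and dropping the scheduling-independent service terms $\sum_j\lambda_j(w_j+g_j)\mathbb{E}[\mathbf{S}_{2,j}]$, the metric reduces to minimizing $\sum_j\lambda_j\,c_j\,\mathbb{E}[\mathbf{W}_{2,j}]$ with $c_j:=w_j+g_j$. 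Using $\mathbb{E}[\mathbf{S}_{2,j}]=\zeta_j+1/\gamma_j=\mathbf{E}_j(\zeta+1/\gamma)$ from \eqref{serExp2ndQueue}, the mean service times are proportional to $\mathbf{E}_j$, so the WSEPT index $c_j/\mathbb{E}[\mathbf{S}_{2,j}]$ is proportional to $(w_j+g_j)/\mathbf{E}_j$; the claim is therefore equivalent to showing that ordering classes by decreasing $c_j/\mathbb{E}[\mathbf{S}_{2,j}]$ is optimal.

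Second, I would restrict attention to static non-preemptive priority rules (these are known to contain an optimizer of a linear holding-cost objective among all non-preemptive work-conserving disciplines) and run an interchange argument over the priority permutation. Suppose an order is not WSEPT; then some two classes adjacent in priority, say $k$ immediately above $k+1$, satisfy $c_k/\mathbb{E}[\mathbf{S}_{2,k}]<c_{k+1}/\mathbb{E}[\mathbf{S}_{2,k+1}]$. Swapping only these two leaves every other class's waiting time unchanged, because the aggregate higher-priority load seen by any class outside $\{k,k+1\}$ is determined by the unordered set $\{1,\dots,k+1\}$ (or $\{1,\dots,k-1\}$ for the classes above), which the swap does not alter. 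Writing $R=\Lambda\mathbb{E}[\boldsymbol{Z}_2^2]/2$, $\sigma_j=\sum_{z=1}^{j}\rho_z$, and the priority waiting-time formula quoted above as $\mathbb{E}[\mathbf{W}_{2,j}]=R/[(1-\sigma_{j-1})(1-\sigma_j)]$, I would evaluate the change in the only two affected terms. With the shorthand $a=1-\sigma_{k-1}$, $b=\rho_k$, $d=\rho_{k+1}$, the change factors as
\begin{equation}
\Delta=\frac{R\,(2a-b-d)\,\lambda_k\lambda_{k+1}\big(c_k\,\mathbb{E}[\mathbf{S}_{2,k+1}]-c_{k+1}\,\mathbb{E}[\mathbf{S}_{2,k}]\big)}{a\,(a-b)\,(a-d)\,(a-b-d)},
\end{equation}
and stability ($\sigma_{k+1}<1$) makes every factor outside the last parenthesis strictly positive. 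Hence $\mathrm{sign}(\Delta)=\mathrm{sign}\big(c_k/\mathbb{E}[\mathbf{S}_{2,k}]-c_{k+1}/\mathbb{E}[\mathbf{S}_{2,k+1}]\big)<0$, so the swap strictly decreases the objective. Thus any non-WSEPT order can be improved, which forces the optimum among priority rules to be the WSEPT order, i.e. strict priority by decreasing $(w_j+g_j)/\mathbf{E}_j$.

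The main obstacle I anticipate is twofold. The bookkeeping in the interchange step---substituting the priority waiting-time formula for the two swapped classes and collecting terms until it factors into the displayed product with the single sign-determining factor $c_k\mathbb{E}[\mathbf{S}_{2,k+1}]-c_{k+1}\mathbb{E}[\mathbf{S}_{2,k}]$---is routine but error-prone, and the positivity of all remaining factors must be checked against the stability condition \eqref{rholeq1_2ndQ}. The more conceptual gap is justifying that a static priority rule is optimal among \emph{all} non-preemptive work-conserving policies, rather than merely among priority orders; I would close this either by appealing to Kleinrock's work-conservation law together with the polymatroid/achievable-region structure of the multi-class M/G/1 mean-waiting-time vector, or simply by invoking the standard WSEPT optimality result for the weighted M/G/1 delay cost \cite{pinedo2016scheduling}.
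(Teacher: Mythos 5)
Your proposal is correct and follows essentially the same route as the paper: the paper's proof is exactly an Adjacent Pairwise Interchange (API) argument, which it delegates entirely to Theorem 11.3.1 of \cite{pinedo2016scheduling} (p.~301), whereas you carry out the interchange computation explicitly (and your factored expression for $\Delta$ and its sign analysis under the stability condition check out). The only difference is one of completeness, not of method---your closing remark about optimality over all non-preemptive work-conserving policies is the same issue the paper resolves by the identical citation to Pinedo.
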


\begin{proof}
The proof of this Lemma relies on the concept of Adjacent Pairwise Interchange (API) on jobs. Since the proof of this theorem follows directly from the proof of Theorem 11.3.1 in \cite{pinedo2016scheduling}, we refer the interested reader to the book in \cite{pinedo2016scheduling}, page $301$, for a detailed treatment of this.	
\end{proof}

%The total expected time in the second (networking) phase can be expressed as follows
%\begin{equation}
%\mathbb{E}\left[\boldsymbol{T}_{n}\right]=\sum_{j=1}^{J}\frac{\lambda_{j}}{\Lambda}\mathbb{E}\left[\boldsymbol{W}_{2,j}\right]+\mathbb{E}\left[\boldsymbol{S}_{2,j}\right] \label{totTimeIn2ndQueue}
%\end{equation}  

Using conditional expectation over the choice of VM $v$, the expected AoI and completion time for any job $j\in \{\mathbb{J}\}$ can be expressed as in the following theorems.

\begin{theorem} \label{AoI_j}
The expected AoI for any job $j\in \mathbb{J}$ is given by
\begin{equation}
\mathbb{E}\left[\boldsymbol{A}_{j}\right]=\sum_{v=1}^{V}p_{j,v}\mathbb{E}\left[\boldsymbol{S}_{1,j,v}\right]+\frac{\lambda_{j}}{\Lambda}\left(\mathbb{E}\left[\boldsymbol{W}_{2,j}\right]+\mathbb{E}\left[\boldsymbol{S}_{2,j}\right]\right)\label{AoI_{j}2}
\end{equation}

\end{theorem}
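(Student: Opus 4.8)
The plan is to obtain the claimed identity directly from the definition of the age metric in \eqref{ageEqn}, $\boldsymbol{A}_j = \boldsymbol{Y}_j + \boldsymbol{S}_{1,j,v} + \boldsymbol{W}_{2,j} + \boldsymbol{S}_{2,j}$, by taking expectations stage by stage. Since $\mathbb{E}[\boldsymbol{Y}_j]$ depends only on the update rates $\mu_i$ of the vehicles in $\mathbf{I}_j$ and is independent of the scheduling probabilities (as argued right after \eqref{ageEqn}), I would first drop it, leaving $\mathbb{E}[\boldsymbol{A}_j] = \mathbb{E}[\boldsymbol{S}_{1,j,v}] + \mathbb{E}[\boldsymbol{W}_{2,j}] + \mathbb{E}[\boldsymbol{S}_{2,j}]$ to evaluate. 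The only randomness coupling the first term to the decision variables is the choice of serving VM, so the natural tool throughout is the law of total expectation conditioned on that choice.

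First I would handle the computation service term. Under the PPS policy job $j$ is routed to VM $v$ with probability $p_{j,v}$, and conditioned on that event the service time is shifted exponential with mean $\mathbb{E}[\boldsymbol{S}_{1,j,v}] = \beta_{v,j} + 1/\alpha_{v,j}$. Averaging over the VM choice with the weights $p_{j,v}$ (which sum to one by \eqref{sum_p}) yields the first summand $\sum_{v=1}^{V} p_{j,v}\,\mathbb{E}[\boldsymbol{S}_{1,j,v}]$, with no contribution from the queue waiting $\boldsymbol{W}_{1,j,v}$ because, as emphasized in the model, the age does not accrue while a job waits in the VM queue.

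The second and more delicate step is the networking contribution, where the factor $\lambda_j/\Lambda$ must appear; I expect this to be the main obstacle. The networking stage is a single common queue fed by the superposition of the $V$ VM output streams, which (by the Poisson-in/Poisson-out property invoked earlier) is Poisson of rate $\Lambda = \sum_j \lambda_j$, with vehicle $j$ contributing a substream of rate $\lambda_j$; thus a fraction $\lambda_j/\Lambda$ of the traffic through this shared queue belongs to vehicle $j$. I would make the weighting rigorous through a renewal-reward / PASTA argument on the common queue: the per-job networking sojourn has mean $\mathbb{E}[\boldsymbol{W}_{2,j}] + \mathbb{E}[\boldsymbol{S}_{2,j}]$, with $\mathbb{E}[\boldsymbol{W}_{2,j}]$ the M/G/1 priority waiting time already derived and $\mathbb{E}[\boldsymbol{S}_{2,j}] = \zeta_j + 1/\gamma_j$, and attributing this delay to vehicle $j$ according to its arrival share $\lambda_j/\Lambda$ produces the second summand $\tfrac{\lambda_j}{\Lambda}(\mathbb{E}[\boldsymbol{W}_{2,j}] + \mathbb{E}[\boldsymbol{S}_{2,j}])$. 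Adding the two summands gives the stated formula. The part needing the most care is precisely this $\lambda_j/\Lambda$ normalization: one must argue that, because the networking queue is shared across all vehicles while the age is tracked per vehicle, the long-run networking delay experienced by vehicle $j$'s information enters weighted by how often vehicle $j$'s jobs traverse the common queue rather than at full weight.
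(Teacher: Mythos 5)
Your overall route is the same as the paper's: take expectations in the definition \eqref{ageEqn}, discard $\boldsymbol{Y}_j$ (constant in the scheduling variables) and the computing-queue wait $\boldsymbol{W}_{1,j,v}$ (which does not age the information), and obtain the first summand $\sum_{v} p_{j,v}\mathbb{E}[\boldsymbol{S}_{1,j,v}]$ by conditioning on the VM choice. Up to that point your argument and the paper's proof coincide essentially line by line.

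The gap is exactly where you predicted it: the $\lambda_j/\Lambda$ factor, and your traffic-share/renewal-reward argument does not actually produce it. The theorem concerns the expected AoI of a tagged job of vehicle $j$; conditional on such a job traversing the common networking queue, it suffers the full waiting time $\boldsymbol{W}_{2,j}$ and the full service time $\boldsymbol{S}_{2,j}$, so a per-job expectation taken from \eqref{ageEqn} yields $\mathbb{E}[\boldsymbol{W}_{2,j}]+\mathbb{E}[\boldsymbol{S}_{2,j}]$ at weight one --- how often vehicle $j$'s jobs visit the shared queue is irrelevant once you condition on the job being there. A weight of $\lambda_j/\Lambda$ could only arise under a different bookkeeping, namely a system-wide average over all jobs in the networking queue that is afterwards attributed per vehicle; but under that convention the computing service term would have to be discounted in the same way (the VM queues are shared across vehicles too), and it is not in the stated formula. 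So your argument, made rigorous, proves a different identity (networking terms at full weight) rather than the stated one; the two conventions cannot be mixed inside a single expectation. For what it is worth, the paper's own proof has the same hole: it argues only the points you already covered (no aging while waiting for a VM, averaging over the VM choice) and then asserts that the statement ``follows,'' never deriving the $\lambda_j/\Lambda$ factor. You identified the soft spot correctly, but the PASTA/arrival-share patch you propose does not close it.
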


\begin{proof}
	
%	See Appendix A.
We start by noting that the waiting time
of job $j$ in the computing phase before processing does not count to the AoI since the most recent information will be used once a task has been placed for execution. Thus, only the computing service time, waiting time in the networking queue, and service time of networking will contribute to the AoI. Since the age of job $j$ is $1/\mu_j$, the age increases at its current age $1/\mu_j$ by the amount of processing times in the computing service stage and the additional time incurred by networking phase. By averaging over the choice of VM, the statement of the theorem in  \eqref{AoI_{j}2} follows.   
\end{proof}

\begin{theorem} \label{C_j}
	The expected completion time for any job $j\in \mathbb{J}$ is given by
\begin{align}
\mathbb{E}\left[\boldsymbol{C}_{j}\right]=\sum_{v=1}^{V}\left[p_{j,v}\left(\mathbb{E}\left[\boldsymbol{W}_{1,j,v}\right]+\mathbb{E}\left[\boldsymbol{S}_{1,j,v}\right]\right)\right] \nonumber \\
+\mathbb{E}\left[\boldsymbol{W}_{2,j}\right]+\mathbb{E}\left[\boldsymbol{S}_{2,j}\right]	\label{C_{j}2}
\end{align}

\end{theorem}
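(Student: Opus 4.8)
The plan is to mirror the argument used for Theorem~\ref{AoI_j}, starting from the definition of the completion time and peeling off the contributions stage by stage. Recall that for a job of vehicle $j$ routed to VM $v$ we have the pathwise identity $\boldsymbol{C}_j = \boldsymbol{W}_{1,j,v} + \boldsymbol{S}_{1,j,v} + \boldsymbol{W}_{2,j} + \boldsymbol{S}_{2,j}$. The only difference from the AoI case is that here the computing waiting time $\boldsymbol{W}_{1,j,v}$ is retained rather than dropped, since every stage of the job's sojourn contributes to the instant at which it is finally delivered, whereas it was irrelevant to freshness.

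First I would apply linearity of expectation to split $\mathbb{E}[\boldsymbol{C}_j]$ into the two computing-phase terms plus the two networking-phase terms. The VM index $v$ serving a given job of vehicle $j$ is itself random under the PPS policy, chosen with probability $p_{j,v}$ and $\sum_{v} p_{j,v}=1$. I would therefore condition on this choice and invoke the tower property for the first two terms, giving $\mathbb{E}[\boldsymbol{W}_{1,j,v}+\boldsymbol{S}_{1,j,v}] = \sum_{v=1}^{V} p_{j,v}\left(\mathbb{E}[\boldsymbol{W}_{1,j,v}]+\mathbb{E}[\boldsymbol{S}_{1,j,v}]\right)$, where the conditional expectations are exactly the per-VM M/G/1 quantities already derived earlier in this section.

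The crux is handling the networking-phase terms $\boldsymbol{W}_{2,j}$ and $\boldsymbol{S}_{2,j}$. Here I would argue that they do not depend on which VM performed the computation: all completed jobs merge into a single common networking queue, the networking service time of a class-$j$ job depends only on its download demand $\boldsymbol{E}_j$ (hence its law is unaffected by the VM route), and, as established earlier, the superposition of the Poisson output streams from all VMs makes the arrivals to this queue Poisson with aggregate rate $\Lambda$. Consequently the steady-state expected priority-queue waiting time $\mathbb{E}[\boldsymbol{W}_{2,j}]$ and the mean service time $\mathbb{E}[\boldsymbol{S}_{2,j}]$ are the same conditioned on any $v$, so averaging them against the probabilities $p_{j,v}$ merely reproduces $\mathbb{E}[\boldsymbol{W}_{2,j}]+\mathbb{E}[\boldsymbol{S}_{2,j}]$ via $\sum_v p_{j,v}=1$. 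Combining this with the conditioned computing-phase sum yields \eqref{C_{j}2}.

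The main obstacle I anticipate is justifying the independence of $\mathbb{E}[\boldsymbol{W}_{2,j}]$ from the VM assignment rigorously, rather than merely asserting it: the actual arrival epoch of the tagged job into the networking queue does depend on which VM (and its queue backlog) it traversed. The resolution is to rely on the fact that the derived expression for $\mathbb{E}[\boldsymbol{W}_{2,j}]$ is a steady-state marginal depending only on the priority class and the aggregate traffic intensities $\rho_z$ and second moment $\mathbb{E}[\boldsymbol{Z}_2^2]$ from \eqref{eq:ser2ndQueue}, all of which are insensitive to per-job routing, so that the conditional mean waiting time coincides across VMs. Once this is accepted, the remaining algebra is immediate.
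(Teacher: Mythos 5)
Your proposal is correct and follows essentially the same route as the paper: the paper's proof likewise starts from the four-stage decomposition of $\boldsymbol{C}_j$, averages over the probabilistic VM choice $p_{j,v}$ for the computing-phase terms, and adds the networking-phase terms unweighted (the paper additionally mentions the release time $1/\lambda_j$ only to discard it as a constant). Your treatment is in fact more careful than the paper's brief argument, particularly in justifying why $\mathbb{E}[\boldsymbol{W}_{2,j}]$ and $\mathbb{E}[\boldsymbol{S}_{2,j}]$ are insensitive to the VM assignment and hence pass through the conditioning via $\sum_v p_{j,v}=1$.
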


\begin{proof}
	
%	See Appendix B.
We note that the expected release (arrival) time of a job $j$ is $\frac{1}{\lambda_j}$. This time also represents the earliest time, on expectation, that a job $j$ can be available to start processing. Clearly, the amount of work still to be done for a job $j$ to be complete increases at a release time by the amount of processing times in both computing phase and networking phase. By averaging over the choice of the VM and ignoring the term $(1/\lambda_{j})$ since its is fixed and thus cannot be optimized, we can find the expected completion time of a job $j$ as given in \eqref{C_{j}2}
%\begin{align*}
%\mathbb{E}\left[\boldsymbol{C}_{j}\right]=\sum_{v=1}^{V}\left[p_{j,v}\left(\mathbb{E}\left[\boldsymbol{W}_{1,j,v}\right]+\mathbb{E}\left[\boldsymbol{S}_{1,j,v}\right]\right)\right] \nonumber \\
%+\mathbb{E}\left[\boldsymbol{W}_{2,j}\right]+\mathbb{E}\left[\boldsymbol{S}_{2,j}\right]	\label{C_{j}23}
%\end{align*}
which proves the statement of the theorem. 
\end{proof}

\section{Tradeoff between AoI and Completion time}
In this section, we will first formulate a joint AoI and completion time optimization for multiple heterogeneous jobs which determines the optimal PPS scheduling probabilities. The formulation is then shown to be a convex optimization problem. The solution is then used to develop an algorithm for scheduling at the compute server. 

\subsection{Joint Optimization for AoI and Completion time}
\label{proForm}
Now we formulate a joint AoI and completion time optimization for multiple heterogeneous jobs. Let $\boldsymbol{p}=(p_{j,v},\,\forall j=\{1,2,\cdots,N\},\,\text{and}\, v=\{1,2,\cdots,V\})$. 
Our goal is  to minimize both AoI and completion times over
the choice of scheduling access decisions $\boldsymbol{p}$. Since this is 
a multi-objective optimization, the objective can be modeled
as a convex combination of the two metrics. Let $\lambda=\sum_{j} \lambda_j$ be the total arrival rate of jobs.
% and $\mu = \sum_{j} \mu_j$ be the total rate of AoI of jobs. 
 Then, $w_j=\lambda_j\theta/\lambda $ is the ratio of job $j$ requests
%  and $\mu_j/\mu$ is the ratio of job $j$ age
   to the total requests of all jobs, where $\theta \in [0,1]$ is a trade-off factor that determines the relative significance of completion time and AoI in the optimization problem.  The first objective is the minimization of the total completion time, averaged over all job requests, and is given by $\sum_{j}w_j\bf{E}\left[\bf{C}_{j}\right]$. The second objective is the minimization of AoI of all jobs, averaged over all job requests, and is given by $\sum_{j}g_j\bf{E}\left[\bf{A}_{j}\right]$, where $g_j=\frac{\lambda_{j}}{\lambda}(1-\theta)$. 
   We note that networking part is already optimized by priority scheduling. The choice of $\boldsymbol{p}$ would only impact the computation phase, yet this choice still affects both AoI and completion time. Thus, we wish to find probabilities such that the computation phase is optimized for the weighted metric.
   Then, optimizing a convex combination of the two metrics can be formulated as follows.
\begin{align}
\text{\textbf{min}}\,\,\,\,\,\, & \sum_{j=1}^{N}\left[\theta \frac{\lambda_{j}}{\lambda}\mathbb{E}\left[\boldsymbol{C}_{j}\right]+(1-\theta)\frac{\lambda_{j}}{\lambda}\mathbb{E}\left[\boldsymbol{A}_{j}\right]\right] \label{optPob}\\
\text{\textbf{s.t.}}\,\,\,\,\,\,\nonumber \\
& \eqref{sum_p},  
\eqref{arrRate_v}, 
\eqref{rholeq1}, 
\eqref{rholeq1_2ndQ}, 
\eqref{AoI_{j}2},
\eqref{C_{j}2}, \label{const1}\\
& p_{j,v}\geq0,\,\,\forall j,v \label{pgeq0}\\
\text{\textbf{var}}\,\,\,\,\,\, &\,\,\, \boldsymbol{p}\nonumber 
\end{align}

Here $\theta \in [0,1]$ is a trade-off factor that determines the relative significance of completion time and AoI in the optimization problem. 
Varying $\theta=0$ to $\theta=1$, the solution for \eqref{optPob} spans the solutions that minimize the completion time to ones that minimize the AoI of jobs. Constraint \eqref{sum_p} ensures the feasibility of the scheduling probabilities. Further, Constraint \eqref{arrRate_v} gives the aggregate arrival rate $\Lambda_{v}$ for each VM $v$ given the PPS and the arrival rates of jobs $\lambda_j$'s, while Constraint \eqref{rholeq1} ensures that the load intensity at VM $v$ is less than one for stable system. Constraint \eqref{rholeq1_2ndQ} gives the load intensity at the second queue. Constraints  \eqref{AoI_{j}2} and 
\eqref{C_{j}2} give the expression for the expected completion time and expected AoI for all jobs $j$, respectively. Finally, Constraint \eqref{pgeq0} ensures the positivity of the scheduling probabilities. Next, we present our proposed algorithm to solve the optimization problem. 

\subsection{Convexity of finding PPS scheduling probabilities}

In this subsection, we %present our proposed algorithm to solve the aforementioned optimization problem. We now 
show that the optimization problem is convex for all $p_{j,v}$, $j\in \mathbb{J}$ and $v\in \mathbb{V}$.

\begin{theorem} \label{convxOptProb}
The objective function defined in \eqref{optPob} is convex in $\boldsymbol{p}=(p_{j,v},\,j=1,2,3,\ldots,J,\, and\,v=1,2,3,\ldots,V)$ in the region where constraints in \eqref{const1}-\eqref{pgeq0} are satisfied. 

\end{theorem}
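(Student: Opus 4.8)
The plan is to strip the objective \eqref{optPob} down to the single term that is genuinely nonlinear in $\boldsymbol p$ and then prove convexity of just that term. First I would observe that nearly everything in the objective is affine in $\boldsymbol p$. The computing service times $\mathbb{E}[\boldsymbol S_{1,j,v}]=\beta_{v,j}+1/\alpha_{v,j}$ are constants, so $\sum_v p_{j,v}\mathbb{E}[\boldsymbol S_{1,j,v}]$ is linear; and the networking quantities $\mathbb{E}[\boldsymbol W_{2,j}]$ and $\mathbb{E}[\boldsymbol S_{2,j}]$ depend only on $\lambda_j,\zeta_j,\gamma_j$ and on the fixed priority order of Lemma~\ref{theo1}, hence are independent of $\boldsymbol p$. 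Consequently the AoI objective $\sum_j(1-\theta)(\lambda_j/\lambda)\mathbb{E}[\boldsymbol A_j]$ from Theorem~\ref{AoI_j} is affine in $\boldsymbol p$, and in the completion-time objective from Theorem~\ref{C_j} every piece except the computing waiting time $\mathbb{E}[\boldsymbol W_{1,j,v}]$ is affine. Since affine functions are convex and sums of convex functions are convex, convexity of \eqref{optPob} reduces to convexity of the aggregate waiting-time term.

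Next I would collapse that term to a per-VM sum. Because the Pollaczek--Khinchine waiting time $\mathbb{E}[\boldsymbol W_{1,j,v}]$ of the FCFS M/G/1 queue at VM $v$ does not actually depend on $j$, writing it as $\mathbb{E}[\boldsymbol W_{1,v}]$ and using $w_j=\theta\lambda_j/\lambda$ together with $\sum_j\lambda_j p_{j,v}=\Lambda_v$ from \eqref{arrRate_v}, the waiting contribution becomes
\[
\frac{\theta}{\lambda}\sum_{v=1}^{V}\Lambda_v\,\mathbb{E}[\boldsymbol W_{1,v}]
=\frac{\theta}{2\lambda}\sum_{v=1}^{V}\frac{\Lambda_v\,\psi_v(\boldsymbol p)}{1-\phi_v(\boldsymbol p)},
\]
where $\phi_v(\boldsymbol p)=\Lambda_v\mathbb{E}[\boldsymbol Z_{1,v}]$ and $\psi_v(\boldsymbol p)=\Lambda_v\mathbb{E}[\boldsymbol Z_{1,v}^2]$ are linear combinations of the $p_{j,v}$ with constant coefficients (clearing the $1/\Lambda_v$ in the moments against the leading $\Lambda_v$), and $\Lambda_v$ is linear as well. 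Each summand depends only on the $v$-th column $\{p_{j,v}\}_j$, so the Hessian of the sum is block diagonal in these columns; it therefore suffices to prove that each per-VM function $h_v=\Lambda_v\psi_v/(1-\phi_v)$ is convex in its own column on the stability region $\{\phi_v=\rho_{1,v}<1,\ p_{j,v}\ge0\}$ guaranteed by \eqref{rholeq1}.

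The crux, and what I expect to be the main obstacle, is exactly this per-VM convexity. I would attack it by computing $\nabla^2 h_v$ with respect to the column $\{p_{j,v}\}_j$ and trying to certify positive semidefiniteness on $\{\phi_v<1\}$, using that $1-\phi_v>0$ there and the moment inequality $\mathbb{E}[\boldsymbol Z_{1,v}^2]\ge(\mathbb{E}[\boldsymbol Z_{1,v}])^2$ inherited from the shifted-exponential service law. A cleaner route worth trying is to split $h_v$ so as to expose a quadratic-over-linear piece $\Lambda_v^2/(1-\phi_v)$, which is jointly convex by the standard perspective argument, and then to argue that the remaining correction is also convex.

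The reason this is the hard step rather than a routine calculation is that $h_v$ is a ratio whose numerator is a product of two \emph{different} affine functions ($\Lambda_v$, weighted by the mean loads, and $\psi_v$, weighted by the second moments), and such ratios are not convex by inspection: the mismatch between the two coefficient vectors produces indefinite cross terms in the Hessian. Thus joint positive semidefiniteness, as opposed to mere convexity in each coordinate $p_{j,v}$ separately, is the real content of the claim and must be verified directly; if the plain Hessian fails to be PSD in general, I would expect to have to invoke the finer moment structure of the shifted exponential (bounding the second moment against the squared mean) to close the argument.
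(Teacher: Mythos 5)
Your reduction of the problem is the same as the paper's, and is in fact executed more carefully. Like the paper's proof, you discard the networking waiting and service terms as independent of $\boldsymbol{p}$ (the priority order of Lemma~\ref{theo1} and $\Lambda=\sum_j\lambda_j$ are fixed), observe that the computing service-time contribution is affine, and isolate the computing-phase waiting contribution as the only candidate nonlinearity; this is exactly the passage to \eqref{eqConvx} and \eqref{1stTerm}, except that you correctly retain the factor $\Lambda_v$ multiplying the Pollaczek--Khinchine waiting time (the paper's \eqref{1stTerm} is $\mathbb{E}[\boldsymbol{W}_{1,j,v}]$ itself, the extra $\Lambda_v$ having been dropped in its ``simplification''). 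The genuine gap is that your argument stops exactly where the theorem's content begins: convexity of the per-VM term $h_v=\Lambda_v\psi_v/(1-\phi_v)$ is never established. Both of your routes are announced rather than carried out (``I would attack it by computing $\nabla^2 h_v$'', ``a cleaner route worth trying''), and your closing paragraph concedes the Hessian may fail to be PSD. That concern is real: a ratio of non-proportional affine forms is genuinely non-convex in general (already $p_1/(1-p_2)$ has an indefinite Hessian on $0<p_2<1$), and the moment inequality $\mathbb{E}[\boldsymbol{Z}_{1,v}^2]\ge(\mathbb{E}[\boldsymbol{Z}_{1,v}])^2$, which holds for every distribution, gives no leverage against the cross terms you identify. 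The second route fares no better: after splitting off the quadratic-over-linear piece $c\,\Lambda_v^2/(1-\phi_v)$, the leftover $\Lambda_v(\psi_v-c\,\Lambda_v)/(1-\phi_v)$ is again a product of two non-proportional affine forms over an affine form, i.e., the same problematic object. A plan whose decisive step is left open is not a proof.

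For comparison, the paper closes this step by a scalarization: it asserts that \eqref{1stTerm} can be written as $\eta\,\boldsymbol{x}^{T}(\boldsymbol{I}-\boldsymbol{x})^{-1}$ for a single scalar $\eta>1$ ``capturing the ratio'' between the second-moment coefficients and the load coefficients, with $x_j$ the per-job load intensity $p_{j,v}\lambda_j(\beta_{v,j}+1/\alpha_{v,j})$; convexity then follows from convexity of $t\mapsto t/(1-t)$ on the stability region enforced by \eqref{rholeq1}. That step is legitimate precisely when the numerator coefficient vector is proportional to the load vector across jobs (e.g., homogeneous $D_j$), which is exactly the proportionality issue your final paragraph flags; for heterogeneous jobs the ``ratio'' $\eta$ is not a constant. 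So your diagnosis of where the difficulty sits is accurate---arguably sharper than the paper's own treatment---but as a proof of Theorem~\ref{convxOptProb} the proposal is incomplete: to be comparable to the paper you would need either to carry out the scalarization with its implicit proportionality hypothesis made explicit, or to exhibit an actual PSD certificate for $\nabla^2 h_v$, which your own observation indicates cannot exist for arbitrary non-proportional coefficient vectors.
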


\begin{proof}
%	See Appendix C.
Since the sum of convex functions is convex \cite{boyd2004convex}, it is enough to show that:
\begin{equation}
\sum_{v}\Lambda_{v}\mathbb{E}\left[\boldsymbol{W}_{1,j,v}\right]+\sum_{v}\Lambda_{v}\mathbb{E}\left[\boldsymbol{S}_{1,j,v}\right]+  \sum_{j}\frac{\lambda j}{\lambda}\mathbb{E}\left[\boldsymbol{W}_{2,j}\right]\label{eqConvx}
\end{equation}
is convex. However, the last term is independent on $\boldsymbol{p}$ and thus can be ignored. Further, since  $\Lambda_{v}$ is a linear function of $\boldsymbol{p}$, it is enough to prove the convexity with respect to $\Lambda_{v}$. It is easy to show that the second term  in \eqref{eqConvx} is linear functions in $\Lambda_{v}$ and thus convex. For the first term in \eqref{eqConvx}, after simplifying the term, it can be written as
\begin{align}
\frac{\sum_{j}p_{j,v}\lambda_{j}\left(\beta_{v,j}^{2}+\beta_{v,j}+\frac{\beta_{v,j}+2}{\alpha_{v,j}}\right)}{1-\sum_{j}p_{j,v}\lambda_{j}\left(\beta_{v,j}+\frac{1}{\alpha_{v,j}}\right)} \label{1stTerm}
\end{align}
%
%By calculating the Hessian matrix for the quantity in \eqref{1stTerm}, we can show that it is convex with respect to $\boldsymbol{p}$. Moreover, It is not that hard to see that for $j=1$ and $j=2$, the above function is convex. Using similar analogy, we can prove it for any $j\geq 2$.
Now,  we need to show that the above function is convex in $\boldsymbol{p}$. We note that the above function can be expressed as $\eta\boldsymbol{x}^{T}(\boldsymbol{I}-\boldsymbol{x})^{-1}$,
where $\eta>1$ and $\boldsymbol{x}$ is a $J$-dimensional vector. The role of $\eta$ is to capture the ratio between the two terms $\sum_{j}p_{j,v}\lambda_{j}\left(\beta_{v,j}^{2}+\beta_{v,j}+\frac{\beta_{v,j}+2}{\alpha_{v,j}}\right)$
and $\left(\sum_{j}p_{j,v}\lambda_{j}\left(\beta_{v,j}+\frac{1}{\alpha_{v,j}}\right)\right)$.
This function is jointly convex on $x_{j}$ for all $j$ if and only
if $x_{j}<1$, for all $j$, see \cite{boyd2004convex} for detailed treatment on this. In our expression, $x_{j}$ represents the load
intensity (i.e., $p_{j,v}\lambda_{j}(\beta_{j}+\frac{1}{\alpha_{j}})$)
and should be less than one for stability of the system. Thus, the
above expression is convex with respect to $\boldsymbol{p}$.
\end{proof}

\subsection{Proposed Algorithm}\label{prop_algo}

In this subsection, we explain our proposed algorithm to optimize age and completion time tradeoff. The algorithm is summarized as follows. %First,  the arrival rates $\lambda_j$'s are  used to solve the optimization problem defined in \ obtain $\boldsymbol{p}^{*}$. 
%In this method, a window size $W$ is chosen, and the decisions in a window are based on the estimated arrival rates from the previous window.
%Using the estimated arrival rates and 
Using the arrival rates $\lambda_j$'s and a desirable trade-off factor $\theta$, the solution for the optimization problem in \eqref{optPob} gives the optimal PPS scheduling probabilities, i.e., $\boldsymbol{p}^{*}$. Since the Optimization problem is proven to be convex for $\boldsymbol{p}$, it can be
solved optimally by Projected Gradient Descent Algorithm, or any other convex solver \cite{boyd2004convex}.
%Second, given a set of job requests, we dispatch every computation request to VM $v$ with certain probabilities. The requests are processed for computations, in the first queue, in order of their arrivals.
When the computation part is completed, job $j$ is placed on the second queue (networking phase) for sending back to the vehicle.  In the networking phase, jobs are ordered according to the WSEPT rule.

\section{Extensions}
\label{ext}
In this section, we present two possible extensions for our proposed framework. First, we explain how our analysis can be extended to accommodate scenarios where several TORs switches (or edge-routers) are considered. Second, we develop an online algorithm for minimizing the AoI. 
\subsection{$m$-TORs Switches Case}
In this setting, unlike the previous case where only one TOR switch is considered, we consider multiple TOR switches. Thus, upon an arrival for a job request $j$, we first need to choose one of the TORs switches and then one of the servers or VMs. 
Hence, we extend the PPS scheduling proposed above  into two-stage PPS scheduling. The two-stage probabilistic policy chooses one TOR switches ($1$-out-of-$m$) and then, chooses $1$-out-of-$V$ VMs with certain probability, for every job $j$. Let $q_{j,u,v}$ be the probability of requesting job $j$ from the VM $v$
that belongs to the TOR switch $u$. Thus, $q_{j,u,v}$
is given by
\begin{equation}
q_{j,u,v}=\pi_{j,u}\,p_{u,v}\,,
\end{equation}
where $\pi_{j,u}$ is the probability of choosing TOR switch $u$ for job $j$, and $p_{u,v}$ is
the probability of choosing VM $v$ at TOR switch $u$.
%As shown in \cite{xiang2016joint}, 
The two-stage PSS  scheduling gives feasible probabilities for choosing $1$-out-of-$m$ TOR switches
and $1$-out-of-$V$ VMs if and only if there exists conditional probabilities $\pi_{j,u} \in \{0,1\}$ and $p_{u,v} \in \{0,1\}$ satisfying
\begin{align}
\sum_{u=1}^{m}\pi_{j,u} & =1\,\,,\forall j\,\\
\sum_{v=1}^{V}p_{u,v} & =1\,\,,\forall u\,
\end{align}
Using the two-stage PPS
scheduling,
the arrival of job requests at VM $v$ at TOR switch $u$ forms a
Poisson Process with rate $\Lambda_{u,v}=\sum_{j=1}^{N} \lambda_j q_{j,u,v}$ which
is the superposition of $N$ Poisson processes each with rate $\lambda_j q_{j,u,v}$. Then, by replacing $p_{j,v}$ by $q_{j,u,v}$ in Section \ref{AoI_analysis} (and in the following related expressions), we can drive the new completion time and AoI expressions in a similar fashion. Since this extension follows straightforward from our previous analysis, detailed derivation is omitted. 
\subsection{Online Algorithm for Age and Completion Time tradeoff Optimization }
While our proposed PPS scheduling policy is optimized for an offline scenario, an online algorithm can be derived according to the stationary PSS scheduling probabilities. The arrival rates $\lambda_j$ can be estimated based on a window based method. In this method, a window size $W$ is chosen, and the decisions in a window are based on the estimated arrival rates from the previous window. Using the estimated arrival rates, the solution for the optimization problem in \eqref{optPob} gives the optimal offline PPS scheduling probabilities, i.e., $\boldsymbol{p}^{*}$. According to these stationary scheduling probabilities, optimal randomized online policy can be obtained. Recall that this choice of $\boldsymbol{p}$ impacts only the computation phase, which in turn affects both AoI and completion time. Note that networking part is optimized by priority scheduling, i.e., jobs are scheduled according to the  WSEPT rule, which can be run in an online fashion. To summarize, the arrival rates of the files are estimated by a window-based method. Using the estimated arrival rates, the algorithm in Section \ref{prop_algo} is used. 

 %\\
%In summary, the algorithm is as follows. First,  the arrival rates are estimated to obtain $\boldsymbol{p}^{*}$.  
%Then, given a set of job requests, we dispatch every computation request to VM $v$ with certain probabilities. The requests are processed for computations, in the first queue, in order of their arrivals. Once the computation part is completed, job $j$ is placed on the second queue (networking phase) for sending back to the vehicle.  In the networking phase, jobs are ordered according to the WSEPT rule. %Next, we validate our analysis using numerical results for both online and offline algorithms.

\section{Evaluation}
\label{simRes}

\begin{table}
	\caption{Values of $\alpha_{v}$ and $\beta_{v}$ used in the evaluation results
		with units of 1/ms \cite{xiang2016joint}.} 
	
	\centering%
	\begin{tabular}{|c|c|c|c|c|c|}
		\hline 
		Node & Node 1 & Node 2 & Node 3 & Node 4 & Node 5\tabularnewline
		\hline 
		\hline 
		$\alpha_{v}$ & 82 & 76 & 71 & 65 & 60\tabularnewline
		\hline 
		$\beta_{v}$ & $10$ & $12$ & $13$ & $17$ & $16$\tabularnewline
		\hline 
	\end{tabular}\\
	$\vphantom{}$
	
	$\vphantom{}$
	
	\centering%
	\begin{tabular}{|c|c|c|c|c|c|}
		\hline 
		Node & Node 6 & Node 7 & Node 8 & Node 9 & Node 10\tabularnewline
		\hline 
		\hline 
		$\alpha_{v}$ & 51 & 44 & 39 & 34 & 29\tabularnewline
		\hline 
		$\beta_{v}$ & $18$ & $20$ & $21$ & $23$ & $25$\tabularnewline
		\hline 
	\end{tabular}
		\label{VMparameters}
\end{table}

In this section, we evaluate our proposed algorithm for jointly optimizing the two proposed metrics of age and completion time. We employ a hybrid simulation method, where each machine has a different speed (depicted in Table \ref{VMparameters}), unless otherwise explicitly stated. Further, job sizes $D_j$ and $E_j$ are assumed to follow a heavy-tailed Pareto distribution \cite{paretoArnold015} as it is a commonly used
distribution for file sizes \cite{heavyTailedMor99}, with shape factor of 2 and scale
of 300, respectively. In our simulation, the job sizes are limited to be at most 5 times the mean. 
Unless otherwise stated, we set $J=1000$, and $V=10$. The arrival rate of job requests from vehicle $j$ is set to be $\lambda_b/(j+1)$, where $\lambda_b=1$, for all $j$. The rate $\gamma$ and shift $\zeta$ for the networking server is set to be $112$ /ms  and $18$ ms, respectively. While we stick in our simulation to these parameters, our analysis and results remain applicable for any
setting given that the system maintains stable conditions under the chosen parameters.
In order to initialize our algorithm, we assume uniform scheduling,  $p_{j,v}=1/V$. However, this choice of the scheduling probabilities may not be feasible. Thus, we modify this choice to be closest norm feasible solutions. 

\subsection{Comparisons}
The system performance of the developed joint optimization of the completion time and the AoI  is compared with two baseline systems described as follows:
\begin{enumerate}[leftmargin=0.25cm,itemindent=.5cm,labelwidth=\itemindent,labelsep=0cm,align=left]
	%1
	\item {\em Random Computing Assignments-Optimized Networking  (RCA-ON) Policy:} In this strategy, the jobs are assigned to VMs uniformly at random. Thus, the
	values of $p_{j,v}$ are set to be $1/V$. Further, scheduling of jobs for networking is optimized through prioritized queueing as explained in Section \ref{netSched}.
	
    \item {\em Optimized Computing Assignments-FCFS Networking  (OCA-FCFS) Policy:} In this strategy, job assignment for computing servers are optimized using the probabilistic scheduling policy, as explained in Section \ref{schedComp}. In addition, no priority is considered in the networking queue. Thus, the FCFS policy is assumed for placing the jobs after execution on the passing ports for transmission. Note that in this policy, we replaced the expression of the waiting time in the networking phase to that of the FCFS M/G/1 expression, i.e., $\mathbb{E}\left[\boldsymbol{W}_{2,j}\right]=\frac{\Lambda\mathbb{E}\left[\boldsymbol{Z}_{2}^{2}\right]}{2\left(1-\Lambda\sum_{j}\frac{\lambda_{j}}{\Lambda}\mathbb{E}\left[\boldsymbol{S}_{2,j}\right]\right)}
    $.
    
    \item {\em Proportional-service-rate Computing Assignments-Optimized Networking  (PCA-ON) Policy:}  The joint request scheduler
    chooses the access probabilities to be proportional to the
    service rates of the virtual machines, i.e., $p_{j,v}=(\beta_{v,j}+1/\alpha_{v,j})/\sum_{v}(\beta_{v,j}+1/\alpha_{v,j})$. This
    policy assigns VMs proportional to their service rates. These scheduling probabilities are projected toward feasible region to ensure stability of the system. Further, scheduling of jobs for networking is optimized through prioritized queueing as explained in Section \ref{netSched}.
\end{enumerate} 

\subsection{Results}
 The system performance is measured by different objectives including the overall weighted AoI, completion time as well as the tradeoff between these two metrics. We denote our proposed policy by PPS policy, where the computation time is optimized over the choice of the VM, i.e., optimizing the scheduling probabilities $(\boldsymbol{p})$, and the networking time is optimized through priority scheduling.
\textcolor{black}{
\begin{figure}[t]
	\centering\includegraphics[trim=0.1in 0.02in 2.89in 0.0in, clip,width=0.55\textwidth]{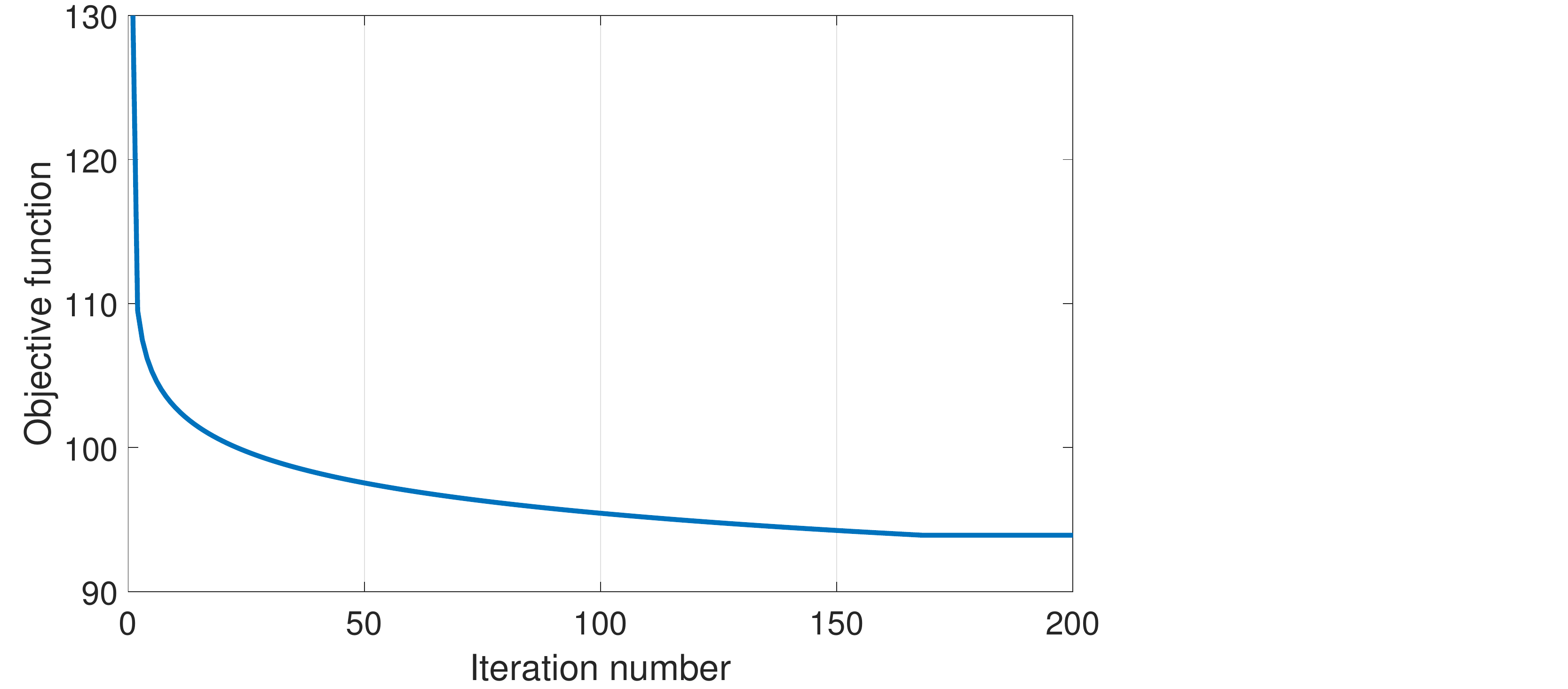}
	\caption{\textcolor{black}{Convergence of the objective function defined in \eqref{optPob}, a weighted sum of age and completion time,  versus number of iterations for $\theta=0.3$, $V=14$, and $N=1000$.  
		\label{conveg}}}
	%\vspace{-.20in}
\end{figure}
}

\begin{figure}[t]
	\centering\includegraphics[trim=0.1in 0.1in 2.0in 0.0in, clip,width=0.57\textwidth]{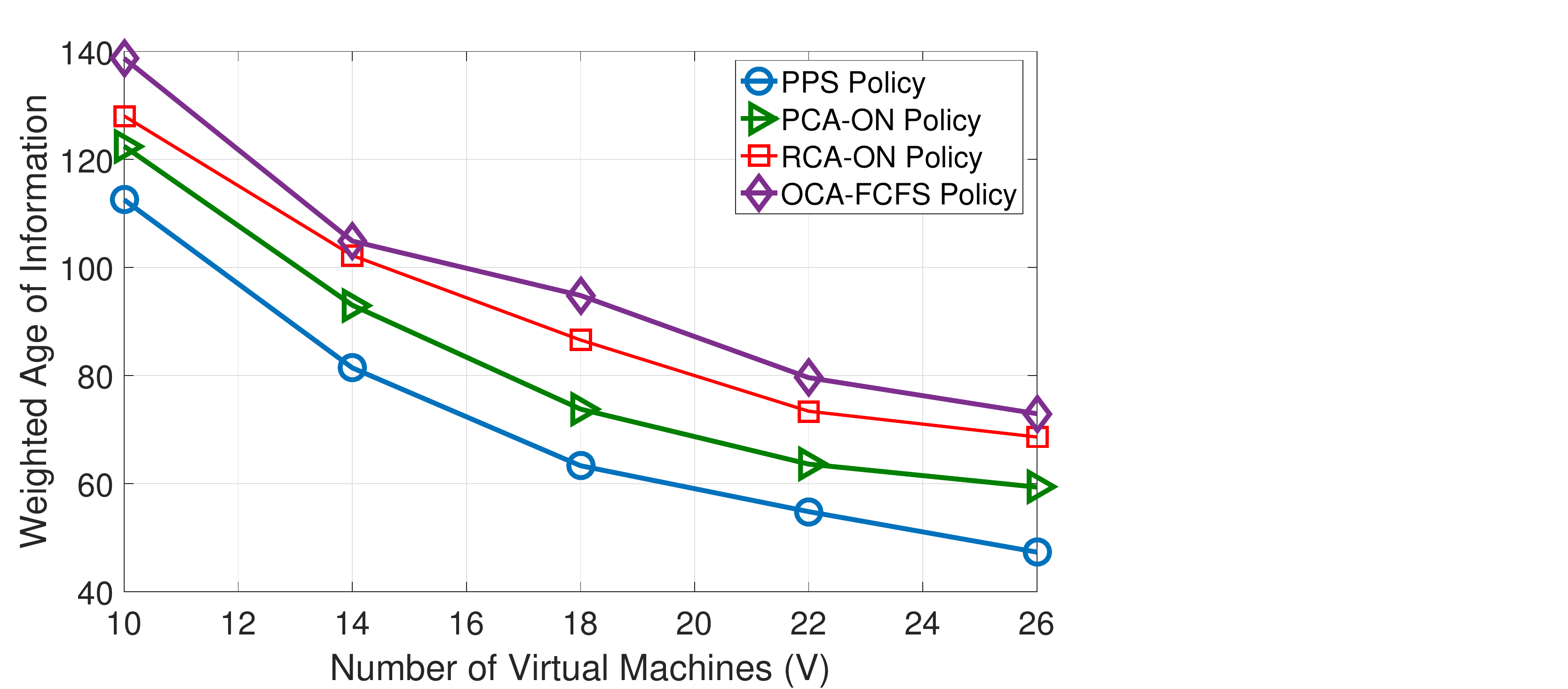}
	\caption{Weighted age of information for different number of virtual machines.  
		\label{AoI_vs_V}}
	%\vspace{-.20in}
\end{figure}

\begin{figure}[t]
	\centering\includegraphics[trim=0.1in 0.1in 3.15in 0.0in, clip,width=0.55\textwidth]{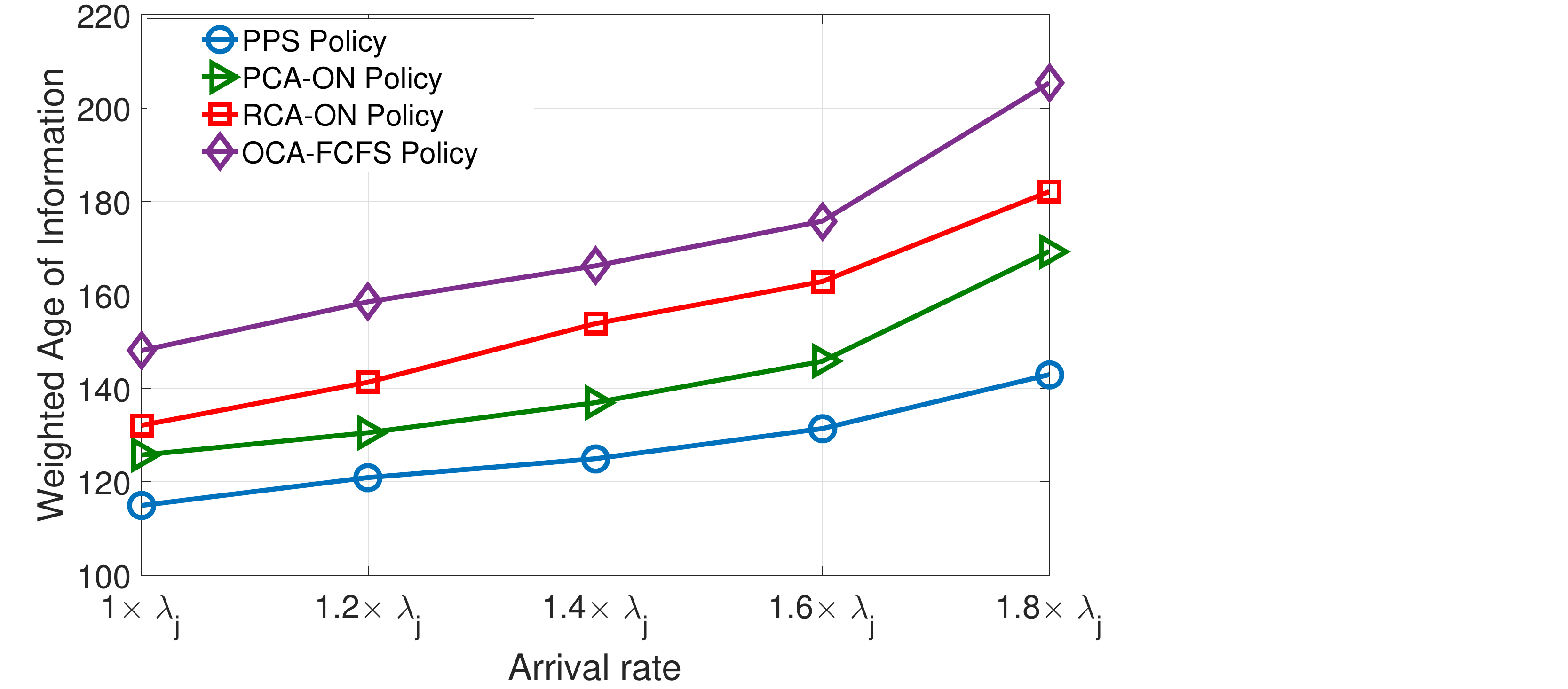}
	%	\centering\includegraphics[scale=0.20]{sysModelPllstreams}
	%\vspace{-.4in}
	\caption{  Weighted age of information for different arrival rates $\lambda_j$. We vary the basic arrival rate from $1\times\lambda_j$ to $1.8 \lambda_j$ with an increment step of $0.20$.  
		\label{AoI_vs_Lambda}}
	%\vspace{-.20in}
\end{figure} 

\begin{figure}[t]
	\centering\includegraphics[trim=0.1in 0.1in 3.15in 0.0in, clip,width=0.52\textwidth]{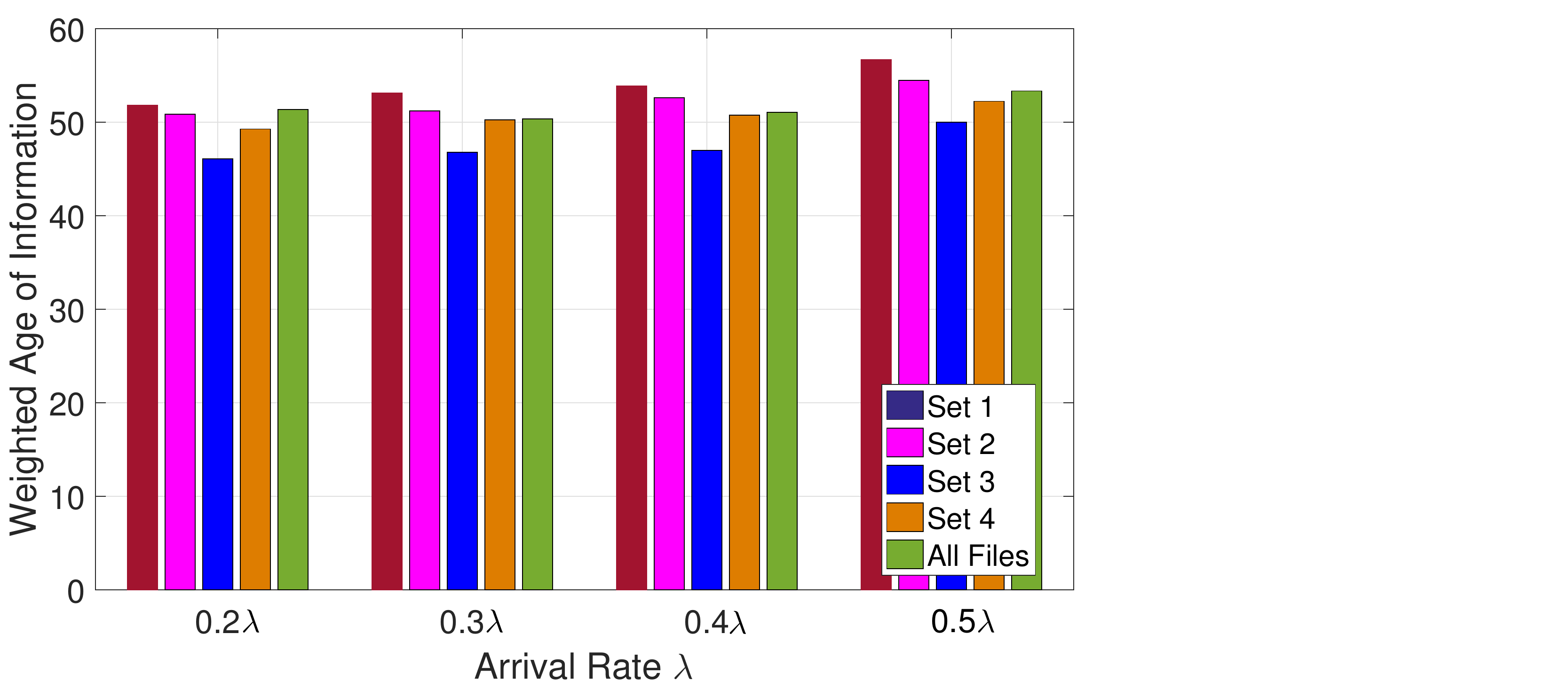}
	%	\centering\includegraphics[scale=0.20]{sysModelPllstreams}
	%\vspace{-.4in}
	\caption{ \textcolor{black}{ Weighted age of information for different set of arrival rates $\lambda_j$. We vary the arrival rate of files from $0.2\times\lambda_j$ to $.5 \lambda_j$ with an increment step of $0.10\lambda_j$, where $\lambda_j$ is the base arrival rate.  We set the number of VMs to $14$.
		\label{AoI_groups}}}
	%\vspace{-.20in}
\end{figure}

\begin{figure}[t]
	\centering\includegraphics[trim=0.0in 0.0in 4.4in 0.0in, clip,width=0.50\textwidth]{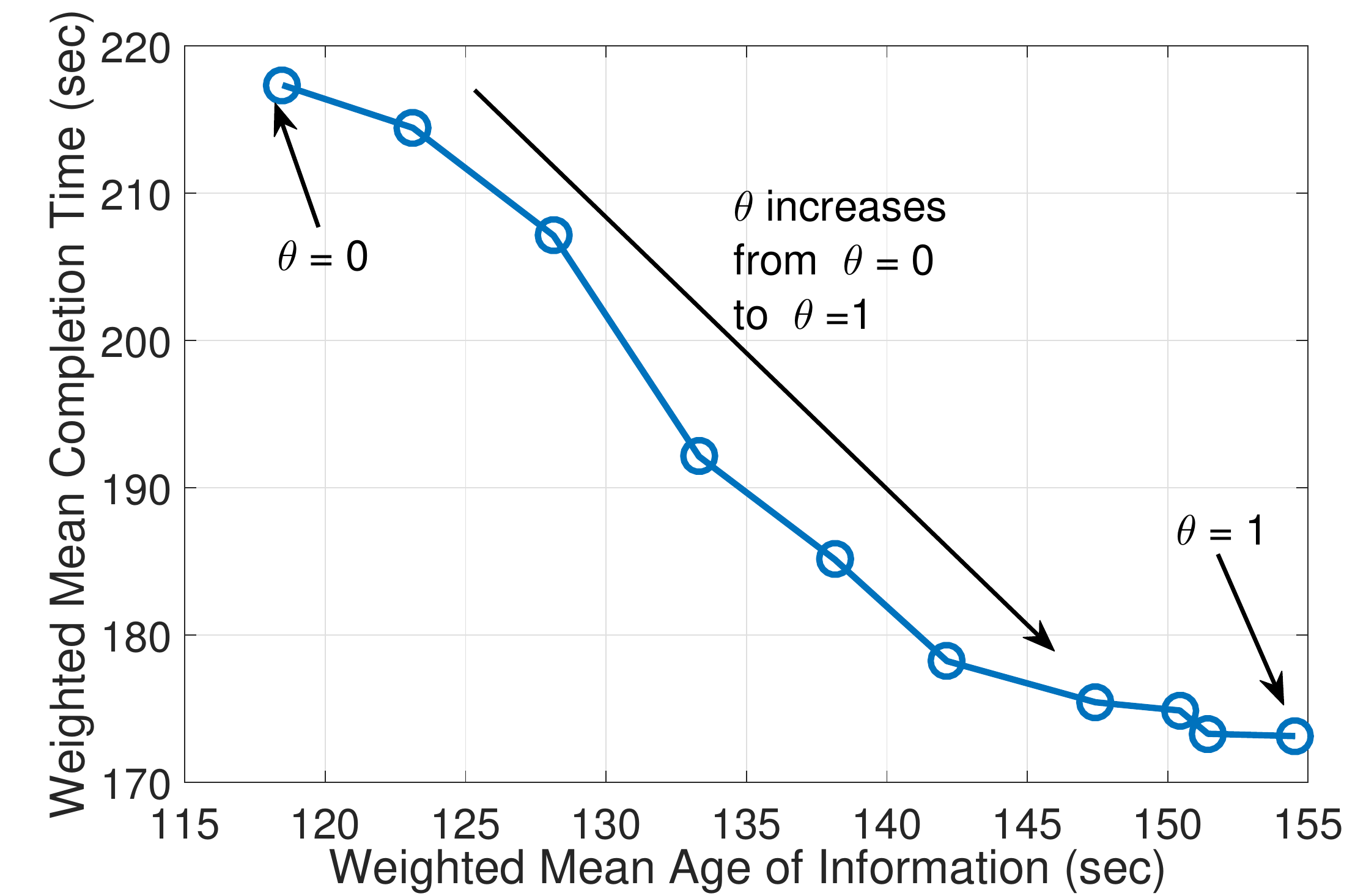}
	%	\centering\includegraphics[scale=0.20]{sysModelPllstreams}
	%\vspace{-.4in}
	\caption{  Tradeoff between mean completion time and average age of information obtained by varying $\theta$.  
		\label{tradeoff}}
	%\vspace{-.20in}
\end{figure} 

\textcolor{black}{
{\it Convergence of the Proposed Algorithm:}
Figure \ref{conveg} shows that
the proposed algorithm converges within
200 iterations to the optimal value, validating
the efficiency of the proposed optimization algorithm. It shows the convergence of the weighted sum of AoI and completion time versus the number of iterations. In the rest of the results, $200$
iterations will be used to get the required results.}

{\it Impact of number of VMs:}
The impact of changing the number of VMs is captured in Figure \ref{AoI_vs_V}. The number of VMs is changed from 8 VM to 24 VM and we set $\theta = 1$ to focus on age minimization. Clearly, the proposed PPS Policy, where assigning jobs to VMs is optimized, performs the best followed by RCA-ON and OCA-FCFS policy performs the worst. 
It can be seen that as the number of VMs increases, the AoI metric is improved (the received information is more recent with small age). Further, as the number of VMs decreases, the percentage of improvement increases. This observation is because with less VMs, more priority classes are existed and also more requests are expected to wait in the second queues. Thus, the relative impact of efficient machine scheduling can accordingly increase. Thus, PPS policy should have larger improvement when the number of VMs is limited, which is the typical case in real systems. 

{\it Effect of arrival rate of update requests:} Figure \ref{AoI_vs_Lambda} shows the effect of increasing system workload, obtained by varying the arrival rates of the vehicle request for updates from $\lambda$ to $1.8\lambda$ with an increment step of $0.2$, where $\lambda$ is the base arrival rate, on the weighted age of information.
We see a significant improvement in the weighted AoI with the proposed strategy as compared to the baselines. For instance, at the
arrival rate of $1.8\lambda_j$, where $\lambda_j$ is the base arrival rate defined above, the proposed PPS strategy reduces the weighted AoI by more than $20\%$ as compared to the PCA-ON strategy. Further, while the weighted AoI increases as arrival rate increases, the PPS policy still able to maintain small age by better optimizing the system parameters. 

\textcolor{black}{
{\it Effect of the age of information weights:} We next show the effect of varying the weights (i.e., $w_i$,s) on the weighted age of information for the proposed PPS policy. We divide the arrival rate of files into four groups, each with different scale for the base arrival
rates $\lambda_i$. We set $\lambda_i=2/150$ for the first set, $\lambda_i=4/150$ for the second set of $200$ vehicles/requests, $6/250$ for the third set and lastly we set $\lambda_i=3/150$ for the last $200$ requests.
We vary the arrival rate of all files from $0.2\lambda_i$ to $0.5\lambda_i$ with
a step of $0.5\lambda_i$ and plot the weighted age for each group of $200$ files as well as the overall value in Figure \ref{AoI_groups}. While weighted AoI increases as arrival rate increases, our PPS algorithm assigns differentiated
age for different file groups. Group $3$ that has highest weight $3$ (i.e., most age sensitive) always receive
the minimum age tail even though these
files have the highest arrival rate. Thus, efficiently reducing the age of the high arrival rate files reduces the overall
weighted age of information. We note that efficient access probabilities help in differentiating information age
as compared to the strategy where minimum queue-length servers are selected to access the content obtaining lower
weighted age.
}

{\it Tradeoff between Completion time and AoI:}
The preceding analysis and results show a trade off between the mean age of information and
the  average completion time of the jobs. In order to investigate such tradeoff, Figure \ref{tradeoff} plots the  average age of information versus the mean completion time for different values of $\theta$ ranging from $\theta=0$ to $\theta=1$. This figure implies that a compromise between the two  metrics can be achieved by our proposed PPS scheduling algorithm by setting $\theta$  to an appropriate value. As expected, increasing $\theta$ will increase the mean age of information as there is more priority to minimize the average completion time.  More importantly, this plot serves as a look-up for the service provider to decide on an efficient trade-off point between the two metrics based on a desired performance level. Thus, an efficient tradeoff point between the two metrics can be chosen based on the  level desired by the application, i.e.,  a tolerable level of staleness in the delivered information.
%%appropriate

\textcolor{black}{
	\begin{figure}[t]
		\centering\includegraphics[trim=0.2in 0.07in 4.0in 0.0in, clip,width=0.50\textwidth]{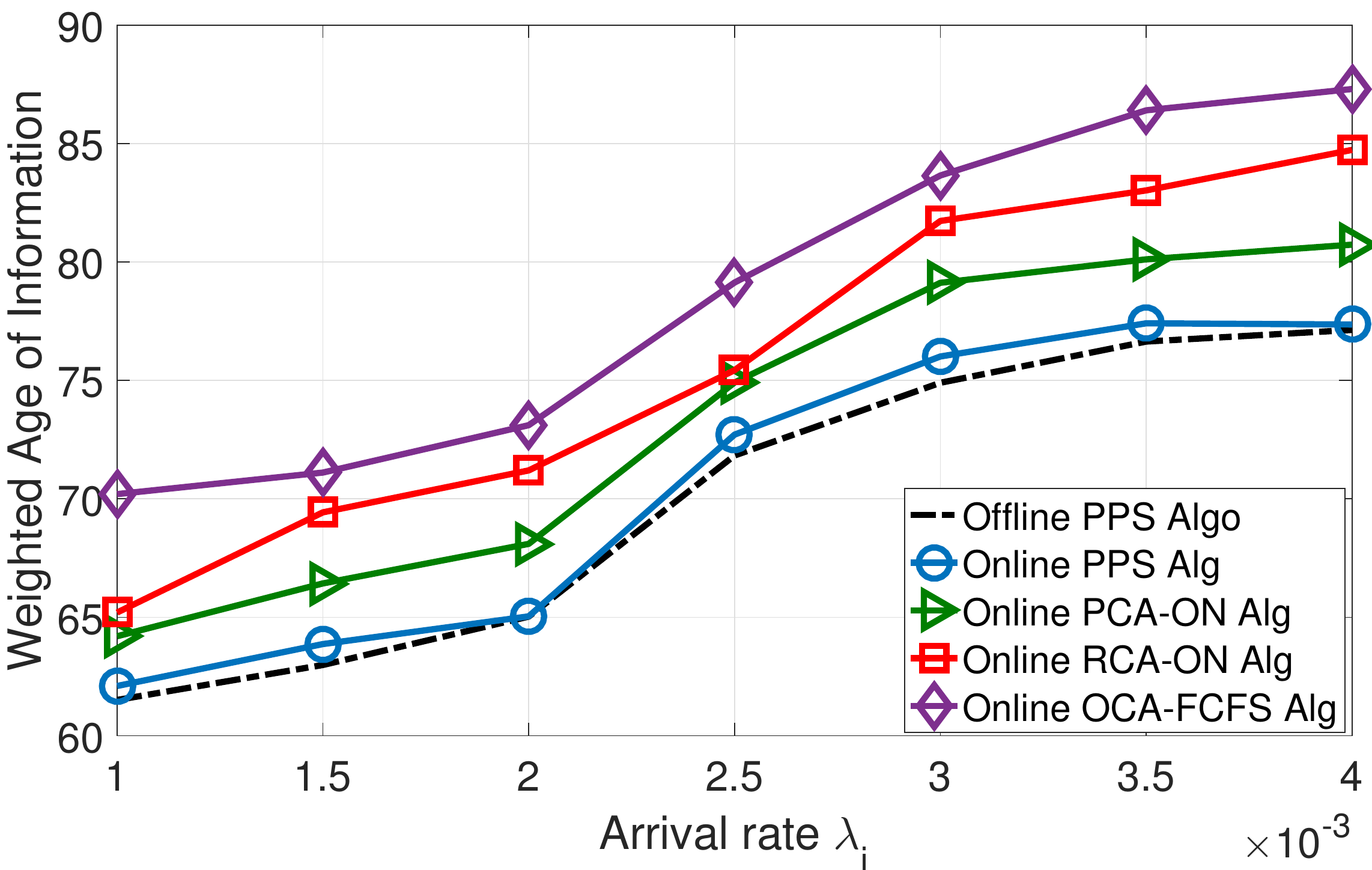}
		\caption{ \textcolor{black}{ Weighted age of information for different arrival rates $\lambda_j$, and $V=13$.  
			\label{AoI_online}}}
		%\vspace{-.20in}
	\end{figure}
{\it Performance of the online algorithms:}
We study the performance of our online PPS algorithm and compare it with different baselines in Figure \ref{AoI_online}. We use MSR Cambridge public Traces  \cite{narayanan2008write}, which are week long block I/O traces of enterprise servers at Microsoft,  to test our proposed PPS algorithm and compare it with the different baselines. We test our algorithms on a data which combines the requests on volume 1 of media server and the requests on volume 1 of web server. The number of read requests from these volumes are 143,973 and 606,487, respectively. We take the file size as a unique identifier, since file id is not given in the data set. We first note that when the system workload increases, the AoI increases. However, PPS still achieves the lowest age by efficiently exploiting all the design control parameters including the scheduling probabilities and priority-based preference. Further, we also note that our online algorithm does not diverge from (close to) the offline PPS algorithm and thus validates the superiority of our proposed PPS algorithm. Moreover, while our PPS algorithm optimizes the system parameters offline, this figure shows that an online version of our algorithm can be developed to keep track of the systems dynamics and thus achieve an improved performance. 
}

%two of the 36 volumes in the data center. servers for which the Trace2 dataset two of
%the 36 provided traces. Specifically, we used the first media trace and the first web trace,  

%which  

\section{Conclusions}
\label{conc}

This paper aims to optimize the freshness and completion time for jobs requested by IoT nodes where different IoT devices are updating information periodically. Novel scheduling strategies are proposed for optimizing the system performance. Based on these strategies, mean age of information and average completion time are calculated. Then, we used those expression to formulate an optimization problem that minimizes a convex function of the two objectives. The problem is shown to be convex and thus an optimal solution is provided.  Based on the offline PPS version, an online algorithm is developed.  Numerical results demonstrate significant improvement as compared to the considered baselines. Further, the results provide important design guidelines for service providers in an IoT networks to provide a  desirable level of staleness in the delivered information. 

%This paper considers a scenario where multiple IoT devices update information on the cloud, and request a computation from the cloud at certain times. The time required to complete the request for computation includes the time to wait fo computation to start on busy virtual machines, performing the computation, and waiting and service in the networking stage for delivering the output to the end user. In this context, the freshness of the information is an important concern and is different from the completion time. This paper proposes novel scheduling strategies for both computation and networking stages. Based on these strategies, the age-of-information (AoI) metric and the completion time are characterized. A convex combination of the two metrics is optimized over the scheduling parameters. Numerical results demonstrate significant improvement as compared to the considered baselines.

%\newpage
%\clearpage
%\bibliographystyle{plain}
%\bibliographystyle{ACM-Reference-Format}
\bibliographystyle{IEEEtran}
%\pagebreak{}
%\bibliography{vidStallRef,allstorage,Tian,ref_Tian2,ref_Tian3,Vaneet_cloud,Tian_rest}

\bibliography{ref_AoI}

% Generated by IEEEtran.bst, version: 1.14 (2015/08/26)
\begin{thebibliography}{10}
\providecommand{\url}[1]{#1}
\csname url@samestyle\endcsname
\providecommand{\newblock}{\relax}
\providecommand{\bibinfo}[2]{#2}
\providecommand{\BIBentrySTDinterwordspacing}{\spaceskip=0pt\relax}
\providecommand{\BIBentryALTinterwordstretchfactor}{4}
\providecommand{\BIBentryALTinterwordspacing}{\spaceskip=\fontdimen2\font plus
\BIBentryALTinterwordstretchfactor\fontdimen3\font minus
  \fontdimen4\font\relax}
\providecommand{\BIBforeignlanguage}[2]{{%
\expandafter\ifx\csname l@#1\endcsname\relax
\typeout{** WARNING: IEEEtran.bst: No hyphenation pattern has been}%
\typeout{** loaded for the language `#1'. Using the pattern for}%
\typeout{** the default language instead.}%
\else
\language=\csname l@#1\endcsname
\fi
#2}}
\providecommand{\BIBdecl}{\relax}
\BIBdecl

\bibitem{bedewy2016optimizing}
A.~M. Bedewy, Y.~Sun, and N.~B. Shroff, ``Optimizing data freshness,
  throughput, and delay in multi-server information-update systems,'' in
  \emph{Information Theory (ISIT), 2016 IEEE International Symposium on}.\hskip
  1em plus 0.5em minus 0.4em\relax IEEE, 2016, pp. 2569--2573.

\bibitem{sarathy2010next}
V.~Sarathy, P.~Narayan, and R.~Mikkilineni, ``Next generation cloud computing
  architecture: Enabling real-time dynamism for shared distributed physical
  infrastructure,'' in \emph{Enabling Technologies: Infrastructures for
  Collaborative Enterprises (WETICE), 2010 19th IEEE International Workshop
  on}.\hskip 1em plus 0.5em minus 0.4em\relax IEEE, 2010, pp. 48--53.

\bibitem{joint013}
S.~Guharoy and N.~B. Mehta, ``Joint evaluation of channel feedback schemes,
  rate adaptation, and scheduling in ofdma downlinks with feedback delays,''
  \emph{IEEE Transactions on Vehicular Technology}, vol.~62, no.~4, pp.
  1719--1731, 2013.

\bibitem{kaul2011minimizing}
S.~Kaul, M.~Gruteser, V.~Rai, and J.~Kenney, ``Minimizing age of information in
  vehicular networks,'' in \emph{Sensor, Mesh and Ad Hoc Communications and
  Networks (SECON), 2011 8th Annual IEEE Communications Society Conference
  on}.\hskip 1em plus 0.5em minus 0.4em\relax IEEE, 2011, pp. 350--358.

\bibitem{sun2017update}
Y.~Sun, E.~Uysal-Biyikoglu, R.~D. Yates, C.~E. Koksal, and N.~B. Shroff,
  ``Update or wait: How to keep your data fresh,'' \emph{IEEE Transactions on
  Information Theory}, vol.~63, no.~11, pp. 7492--7508, 2017.

\bibitem{huang2015optimizing}
L.~Huang and E.~Modiano, ``Optimizing age-of-information in a multi-class
  queueing system,'' in \emph{Information Theory (ISIT), 2015 IEEE
  International Symposium on}.\hskip 1em plus 0.5em minus 0.4em\relax IEEE,
  2015, pp. 1681--1685.

\bibitem{chen2016age}
K.~Chen and L.~Huang, ``Age-of-information in the presence of error,'' in
  \emph{Information Theory (ISIT), 2016 IEEE International Symposium on}.\hskip
  1em plus 0.5em minus 0.4em\relax IEEE, 2016, pp. 2579--2583.

\bibitem{kam2013age}
C.~Kam, S.~Kompella, and A.~Ephremides, ``Age of information under random
  updates,'' in \emph{Information Theory Proceedings (ISIT), 2013 IEEE
  International Symposium on}.\hskip 1em plus 0.5em minus 0.4em\relax IEEE,
  2013, pp. 66--70.

\bibitem{kaul2012real}
S.~Kaul, R.~Yates, and M.~Gruteser, ``Real-time status: How often should one
  update?'' in \emph{INFOCOM, 2012 Proceedings IEEE}.\hskip 1em plus 0.5em
  minus 0.4em\relax IEEE, 2012, pp. 2731--2735.

\bibitem{costa2015age}
M.~Costa, S.~Valentin, and A.~Ephremides, ``On the age of channel state
  information for non-reciprocal wireless links,'' in \emph{Information Theory
  (ISIT), 2015 IEEE International Symposium on}.\hskip 1em plus 0.5em minus
  0.4em\relax IEEE, 2015, pp. 2356--2360.

\bibitem{talak2018optimizing}
R.~Talak, S.~Karaman, and E.~Modiano, ``Optimizing age of information in
  wireless networks with perfect channel state information,'' \emph{arXiv
  preprint arXiv:1803.06471}, 2018.

\bibitem{yates2017timely}
R.~D. Yates, M.~Tavan, Y.~Hu, and D.~Raychaudhuri, ``Timely cloud gaming,'' in
  \emph{INFOCOM 2017-IEEE Conference on Computer Communications, IEEE}.\hskip
  1em plus 0.5em minus 0.4em\relax IEEE, 2017, pp. 1--9.

\bibitem{kam2017information}
C.~Kam, S.~Kompella, G.~D. Nguyen, J.~E. Wieselthier, and A.~Ephremides,
  ``Information freshness and popularity in mobile caching,'' in
  \emph{Information Theory (ISIT), 2017 IEEE International Symposium on}.\hskip
  1em plus 0.5em minus 0.4em\relax IEEE, 2017, pp. 136--140.

\bibitem{kaul2012status}
S.~K. Kaul, R.~D. Yates, and M.~Gruteser, ``Status updates through queues,'' in
  \emph{Information Sciences and Systems (CISS), 2012 46th Annual Conference
  on}.\hskip 1em plus 0.5em minus 0.4em\relax IEEE, 2012, pp. 1--6.

\bibitem{pinedo2016scheduling}
M.~L. Pinedo, \emph{Scheduling: theory, algorithms, and systems}.\hskip 1em
  plus 0.5em minus 0.4em\relax Springer, 2016.

\bibitem{xiang2016joint}
Y.~Xiang, T.~Lan, V.~Aggarwal, Y.-F.~R. Chen, Y.~Xiang, T.~Lan, V.~Aggarwal,
  and Y.-F.~R. Chen, ``Joint latency and cost optimization for erasure-coded
  data center storage,'' \emph{IEEE/ACM Transactions on Networking (TON)},
  vol.~24, no.~4, pp. 2443--2457, 2016.

\bibitem{iqbal2015big}
M.~H. Iqbal and T.~R. Soomro, ``Big data analysis: Apache storm perspective,''
  \emph{International journal of computer trends and technology}, vol.~19,
  no.~1, pp. 9--14, 2015.

\bibitem{7274674}
\BIBentryALTinterwordspacing
F.~Xu, F.~Liu, and H.~Jin, ``Heterogeneity and interference-aware virtual
  machine provisioning for predictable performance in the cloud,'' \emph{IEEE
  Transactions on Computers}, vol.~65, no.~8, pp. 2470--2483, Aug. 2016.
  [Online]. Available:
  \url{doi.ieeecomputersociety.org/10.1109/TC.2015.2481403}
\BIBentrySTDinterwordspacing

\bibitem{chen2014queueing}
S.~Chen, Y.~Sun, U.~C. Kozat, L.~Huang, P.~Sinha, G.~Liang, X.~Liu, and N.~B.
  Shroff, ``When queueing meets coding: Optimal-latency data retrieving scheme
  in storage clouds,'' in \emph{INFOCOM, 2014 Proceedings IEEE}.\hskip 1em plus
  0.5em minus 0.4em\relax IEEE, 2014, pp. 1042--1050.

\bibitem{ross2014introduction}
S.~M. Ross, \emph{Introduction to probability models}.\hskip 1em plus 0.5em
  minus 0.4em\relax Academic press, 2014.

\bibitem{boyd2004convex}
S.~Boyd and L.~Vandenberghe, \emph{Convex optimization}.\hskip 1em plus 0.5em
  minus 0.4em\relax Cambridge university press, 2004.

\bibitem{paretoArnold015}
B.~C. Arnold, \emph{Pareto distribution}.\hskip 1em plus 0.5em minus
  0.4em\relax Wiley Online Library, 2015.

\bibitem{heavyTailedMor99}
M.~Harchol-Balter, ``The effect of heavy-tailed job size distributions on
  computer system design,'' in \emph{Proc. of ASA-IMS Conf. on Applications of
  Heavy Tailed Distributions in Economics, Engineering and Statistics}, 1999.

\bibitem{narayanan2008write}
D.~Narayanan, A.~Donnelly, and A.~Rowstron, ``Write off-loading: Practical
  power management for enterprise storage,'' \emph{ACM Transactions on Storage
  (TOS)}, vol.~4, no.~3, p.~10, 2008.

\end{thebibliography}

\newpage
\appendices
\setcounter{page}{1}
%\section{Appendices}
%\section{Proof of Lemma \label{tailLemma}}

%\input{proof_A}

%\input{proof_B}
%
%\input{proof_C}

%\input{proof_D}

%\input{proof_2ndQPois}
\end{document}